\newtheorem{thm}{Theorem}
\newtheorem{lem}{Lemma}
\newtheorem{defn}{Definition}
\newtheorem{exmp}{Example}
\newtheorem{pro}{Problem}
\begin{document}

\title{Reliable Traffic Monitoring Mechanisms Based on Blockchain in Vehicular Networks}

\author{Jianxiong Guo,
	Xingjian Ding,
	Weili Wu,~\IEEEmembership{Senior Member,~IEEE}
	\thanks{J. Guo and W. Wu are with the Department
		of Computer Science, Erik Jonsson School of Engineering and Computer Science, Univerity of Texas at Dallas, Richardson, TX, 75080 USA; X. Ding is with the School of Information, Renmin University of China, Beijing, CHN
		
		E-mail: jianxiong.guo@utdallas.edu}
	\thanks{Manuscript received April 19, 2005; revised August 26, 2015.}}

\markboth{Journal of \LaTeX\ Class Files,~Vol.~14, No.~8, August~2015}%
{Shell \MakeLowercase{\textit{et al.}}: Bare Demo of IEEEtran.cls for IEEE Journals}

\maketitle

\begin{abstract}
	The real-time traffic monitoring is a fundamental mission in a smart city to understand traffic conditions and avoid dangerous incidents. In this paper, we propose a reliable and efficient traffic monitoring system that integrates blockchain and the Internet of vehicles technologies effectively. It can crowdsource its tasks of traffic information collection to vehicles that run on the road instead of installing cameras in every corner. First, we design a lightweight blockchain-based information trading framework to model the interactions between traffic administration and vehicles. It guarantees reliability, efficiency, and security during executing trading. Second, we define the utility functions for the entities in this system and come up with a budgeted auction mechanism that motivates vehicles to undertake the collection tasks actively. In our algorithm, it not only ensures that the total payment to the selected vehicles does not exceed a given budget, but also maintains the truthfulness of auction process that avoids some vehicles to offer unreal bids for getting greater utilities. Finally, we conduct a group of numerical simulations to evaluate the reliability of our trading framework and performance of our algorithms, whose results demonstrate their correctness and efficiency perfectly.
\end{abstract}

\begin{IEEEkeywords}
	Internet of vehicle, Lightweight blockchain, Reliability, Budgeted auction mechanism, Truthfulness.
\end{IEEEkeywords}

\IEEEpeerreviewmaketitle

\section{Introduction}
\IEEEPARstart{F}{or} the past few years, vehicles have been getting smarter with the progress of technology by installing camera sensors, microcomputers, and communication devices \cite{zhou2014chaincluster} \cite{he2015full}. These vehicles are connected with themselves and other facilities to form a vehicular network. Due to its potential commercial value, the Internet of Vehicles (IoV) has become a hot topic that attracts the attention of academia and industry. The IoV provides a convenient platform for information exchange and sharing between users, such as traffic accidents and road conditions, which can improve the utilization of resources and traffic conditions effectively \cite{zhang2017security}. However, too many mobile and variable entities are involved in the vehicular network, they are usually strangers and do not trust each other. Because of that, information exchange in the IoV still face huge challenges about how to ensure information security and privacy protection in real applications.

The advent of blockchain technology makes it possible to change all of this. Blockchain is a public and distributed database which entered people's sight since Nakamoto published his white paper in 2008 \cite{nakamoto2008bitcoin}. Blockchain takes advantage of the knowledge of modern cryptography and distributed consensus protocol to achieve the purpose of security and privacy protection. Here, the digital signature is an effective means to achieve identity authentication and avoid information leakage, while the consensus protocol allows information to interact freely among users who do not trust each other without the need for a third platform. Thus, it can be used to construct a reliable information trading system because of its decentralization, security, and anonymity \cite{sharma2017software} \cite{li2018blockchain}. Despite this, how to design a reasonable blockchain structure based on the vehicle network, improve the efficiency of information exchange, and guarantee the truthfulness of pricing strategies is still a problem worthy of in-depth discussion.

In this paper, we consider such a scenario: In a smart city, there is a traffic administration (TA) that is responsible for monitoring real-time traffic conditions by collecting information of various road nodes in this city. The traditional approach is to install cameras at various road nodes, but doing so is costly and vulnerable to be damaged by natural and human activities. Especially in some remote corners, there is no need to install a separate camera. Driven by the IoV, the TA can crowdsource its tasks of traffic information collection to vehicles running on the road. Based on that, we need to study the traffic information exchange between TA and vehicles in a smart city. There are two challenges shown as follows: (1) how to construct a reliable and efficient traffic information trading framework implemented by blockchain?; and (2) how to build a fair and truthful mechanism that selects a subset of vehicles as information providers to maximize the TA's profit while preventing vicious competition?

To settle the first challenge, we propose a blockchain-based real-time traffic monitoring (BRTM) system to achieve a reliable and efficient information trading process between TA and vehicles. First, we use modern cryptography methods and digital signature techniques to protect the contents of communication from privacy leakage. In addition to this, the traditional proof-of-work (PoW) consensus mechanism cannot be applied to our resource-limited vehicle network because of its high computational cost and slow confirmation speed \cite{su2018secure}. Thus, we design a lightweight blockchain relied on the reputation-based delegated proof-of-stake (DPoS) consensus mechanism that is able to reduce confirmation time and improve throughput for this system. Here, All the TAs of different cities can be considered as the full nodes which are interconnected with each other to complete the consensus process. There is a reputation value associating with each TA in the BRTM system, which reflects its behaviors in the previous consensus rounds. A full node with a higher reputation value implies it has a higher voting weight and is more likely to become the leader in the future consensus round, thus ensuring the reliability and efficiency of the consensus process.

To settle the second challenge, we propose a budgeted auction mechanism so as to incentivize the vehicles in a city to take on the tasks issued by the TA actively. Here, the TA publishes a task set that contains a number of different tasks about traffic information on various locations. Each active vehicle in this city submits the tasks that it can accomplish and corresponding bid to the TA, then the TA selects a winner set from these active vehicles to undertake its tasks and pay them accordingly. We aim to maximize the TA's profit by selecting an optimal winner set, which can be categorized as a non-monotone submodular maximization problem with knapsack constraint. The simple greedy strategy can give us a valid solution, but it does not satisfy the truthfulness. If no truthfulness, a vehicle may increase its utility by offering a higher bid or colluding with other vehicles, which will damage the fairness and effectiveness of the auction mechanism. Thus, we design a truthful budgeted selection and pricing (TBSAP) algorithm that can guarantee the individual rationality, profitability, truthfulness, and computational efficiency for the budgeted auction mechanism.

Finally, the effectiveness of our proposed mechanisms is evaluated by numerical simulations. It is shown that our proposed mechanisms can enhance the reliability of the blockchain system by urging the nodes to behave rightfully and make sure that the information trading is fair and truthful. To our best knowledge, this is the first time to put forward a reliable blockchain-based traffic monitoring system in a vehicular network that is based on a budgeted auction mechanism to maximize the profit. The rest of this paper is organized as follows: In Sec.  \uppercase\expandafter{\romannumeral2}, we discuss the-state-of-art work. In Sec. \uppercase\expandafter{\romannumeral3}, we introduce our traffic monitoring system and define the utility functions. In Sec. \uppercase\expandafter{\romannumeral4}, we present our design of lightweight blockchain elaborately. In Sec. \uppercase\expandafter{\romannumeral5}, we propose a truthful budgeted auction mechanism. Then we evaluate our proposed system and algorithms by numerical simulations in Sec.\uppercase\expandafter{\romannumeral6} and show the conclusions in Sec. \uppercase\expandafter{\romannumeral7}.

\section{Related Work}
With the development of Internet of Things (IoT), vehicles are no longer isolated individuals, but nodes in a connected network, which lead to the formation of the IoV \cite{kaiwartya2016internet}. Then, the IoV has been developed further due to the progress of computing power and modern communication devices. Kaiwartya \textit{et al.} \cite{kaiwartya2016internet} provided us with an IoV framework that focused on networking architecture mainly. Singh \textit{et al.} \cite{singh2019internet} presented an abstract network model for the IoV and related with different services relied on popular technologies. Ji \textit{et al.} \cite{ji2020survey} designed a novel network structure for the future IoV and gave a comprehensive review of the basic IoV information, which included several network architectures and representative applications of IoV.

In recent years, the rise of blockchain technology overturned the operating mode of the traditional IoT, thereby the blockchain-based IoT has promoted changes in trading methods. Firstly, in energy trading, Li \textit{et al.} \cite{li2017consortium} designed a secure energy trading system, energy blockchain, for the industrial IoT relied on consortium blockchain and credit-based payment scheme. Guo \textit{et al.} \cite{9165852} \cite{guo2020architecture} proposed a blockchain-based distributed multiple energies trading framework to address security and privacy protection. Xia \textit{et al.} \cite{xia2020bayesian} came up with a blockchain-enabled vehicle-to-vehicle electricity trading scheme that exploited Bayesian game pricing to deal with incomplete information sharing. Secondly, in computing resource trading, Yao \textit{et al.} \cite{yao2019resource} studies a resource management problem between the cloud server and miners since the computational power of industrial IoT was limited. Ding \textit{et al.} \cite{ding2020incentive} created a mechanism for the blockchain platform to attract lightweight devices to purchase more computational power from edge servers for participating in mining. Finally, in information trading, Wang \textit{et al.} \cite{wang2017blockchain} designed and realized a securer and more reliable government information sharing system by combining blockchain, network model, and consensus algorithm. Xu \textit{et al.} \cite{xu2018making} devised a blockchain-based big data sharing framework that was appropriate for the resource-limited devices. Chen \textit{et al.} \cite{chen2019secure} proposed a blockchain-based data trading framework that relied on consortium blockchain to achieve secure and truthful data trading for IoV. However, they did not really achieve reliability and ignored the high latency of the consensus process. In this paper, our trading model is different from either of the provious ones and attempt to design a secure and reliable system carefully.

The auction mechanism, as a technique of game theory, has been used to solve the allocation and pricing problems in a variety of applications, such as mobile crowdsensing \cite{zhang2015incentives} \cite{yang2015incentive}, edge computing \cite{kiani2017toward} \cite{jiao2019auction}, and spectrum trading \cite{zhu2014double} \cite{zheng2014strategy}. Zhang \textit{et al.} \cite{zhang2015incentives} surveyed the diverse incentive strategies that encouraged users to take part in mobile crowdsourcing. Yang \textit{et al.} \cite{yang2015incentive} designed two crowdsourcing models under the framework of the Stackelberg game and reverse auction theory respectively. Kaini \cite{kiani2017toward} put forward and solved an auction-based profit maximization problem under the background of the hierarchical model. Jiao \textit{et al.} \cite{jiao2019auction} proposed an auction-based market model for computing resource allocation between the edge server and miners. Zhu \textit{et al.} \cite{zhu2014double} devised a truthful double auction mechanism for cooperative sensing in radio networks. Zheng \textit{et al.} \cite{zheng2014strategy} achieved a social welfare maximization problem by re-distributing wireless channels, while they took the strategy-proof into consideration as well. In this paper, our proposed auction mechanism is similar to that under the mobile crowdsourcing, but there is a budgeted constraint in our model that constrains the range of the winner set. Therefore, both the theoretical analysis and applicable scope are different from the previous work.

\section{Traffic Monitoring System}
In this section, we introduce our traffic monitoring system including network model and utility functions for the entities.

\subsection{Network Model}
Consider a smart city, there is a traffic administration (TA) that is responsible for monitoring real-time traffic condition in this city. It can crowdsource its tasks of traffic information collection to vehicles on the road by paying them a certain amount of money. Therefore, the blockchain-based real-time traffic monitoring (BRTM) system consists of two important entities: TA and vehicle, which is shown in Fig. \ref{fig1}. In the BRTM system, the functionality of the entities in a smart city $S_a$, $a\in\mathbb{Z}^+$, can be shown as follows:

\begin{figure}[!t]
	\centering
	\includegraphics[width=\linewidth]{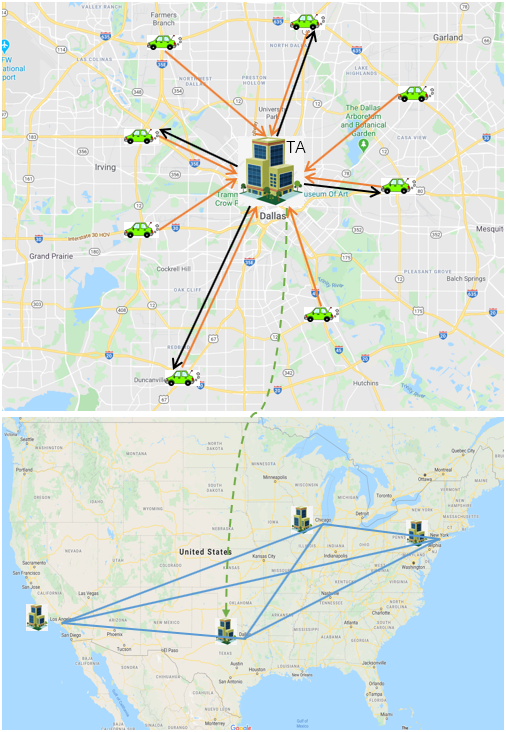}
	\caption{An instance of BRTM system built on the United States, where the orange lines are active vehicles' requests to undertake the tasks and the black line are TA's acceptances for corresponding requests.}
	\label{fig1}
\end{figure}

\textit{Traffic administration (TA): }The TA in the city $S_a$ can be denoted by $TA_a$, which collects the real-time traffic information from those vehicles running in this city. For instance, during peak time, traffic accidents often occur on some heavily trafficked roads. At this time, the $TA_a$ needs to analyze road conditions instantly according to the responses provided by those vehicles that run on the designated location. If $TA_a$ adopts the information provided by some vehicles, it must pay them with data coins. Here, data coin is a kind of digital currency considered as the payment for traffic information.

\textit{Vehicles: }The vehicles in this system are installed with camera sensors, wireless communication devices, and microcomputer systems, which are able to collect, process, and transmit data to other corresponding devices. The active vehicles in the city $S_a$ can be denoted by a set $V_a=\{v_{a1},v_{a2},\cdots,v_{ab},\cdots\}$, where active vehicles are those participating in information trading with $TA_a$ and idle vehicles do not want to share their traffic information. Once a transaction between $TA_a$ and $v_{ab}$ has been completed and validated, the vehicle $v_{ab}$ will receive a certain number of data coins from $TA_a$.

Based on that, a smart city in our BRTM system can be denoted by $S_a=(TA_a, V_a)$, and the whole BRTM system $\mathbb{S}$ can be denoted by $\mathbb{S}=\{S_1,S_2,\cdots,S_a,\cdots\}$. It is very easy to understand, for example, this system can be constructed in a country that is composed of a number of smart cities. Here, all TAs that serve under the cities in $\mathbb{S}$ are interconnected with each other to make up a peer-to-peer (P2P) wide-area network. We take Fig. \ref{fig1} as an instance to demonstrate it.
\begin{exmp}
	Shown as the upper part in Fig. \ref{fig1}, there is a TA in the Dallas city that announces a number of traffic information collection tasks across the important intersections of the whole city. Some vehicles in this city request to undertake part of tasks at a certain price. For a vehicle, whether its request can be accepted is decided by the TA according to the TA's evaluation. Then, a BRTM system is built on the United States, shown as the lower part in Fig. \ref{fig1}. The four main cities, Los Angeles, Chicago, New York, and Dallas are interconnected with each other to form a large network.
\end{exmp}

The P2P network among the TAs in the BRTM system is used for constructing the blockchain network. Here, blockchain is an effective technique to record the transactions and guarantee the security. In our blockchain network, there are two types of nodes which are full node and lightweight node. For the full nodes, they not only executes the consensus process to validate the candidate block but also manages stores the blockchain that contains the whole transactions between TA and vehicles. For the lightweight nodes, they merely store the block headers that is convenient for them to check and verify. In our system $\mathbb{S}$, each $TA_a$ of $S_a\in\mathbb{S}$ is defaulted as a full node and each vehicle $v_{ab}\in V_a$ of $S_a\in\mathbb{S}$ is defaulted as a lightweight node because of lacking enough storage and computational power. Details of the blockchain design will be described in Sec. \uppercase\expandafter{\romannumeral4}.

\subsection{Definitions of Utility functions}
Consider a smart city $S=(TA,V)$ such that $S\in\mathbb{S}$ at some point, we neglect the subscripts and denote by $V=\{v_1,v_2,\cdots,v_n\}$ for convenience. This traffic administration $TA$ releases a task set $T=\{t_1,t_2,\cdots,t_m\}$ about traffic information for the active vehicles to undertake. For the $TA$, there is an appraisement $a_j$ for each task $t_j\in T$. For each vehicle $v_i\in V$, it can attempt to finish a subset of tasks $T_i\subseteq T$ based on its ability and willingness. The cost of $T_i$ for the $v_i$ can be defined as $c_i$, which is private and not known to the $TA$. Then, the vehicle $v_i$ determines a bid $b_i$ and respones the $TA$ with its bid-task pair $(T_i,b_i)$, where $b_i$ is $v_i$'s reserve price that is the lowest price it want to undertake the tasks $T_i$. Generally, we have $T_i\cap T_j\neq\emptyset$ for any vehicle $v_i$ and $v_j$. Here, for each task $t_i\in T$, we can quantify it as the traffic information at some position in this city, thereby an active vehicle is able to submit a subset of tasks based on its running path. After receiving the responses from all active vehicles, the $TA$ needs to select a winner (acceptance) set $W\subseteq V$ and give a payment $p_i$ for each $v_i\in W$. The utility for a vehicle $v_i\in V$ is
\begin{equation}
U_i=\left
\{\begin{IEEEeqnarraybox}[\relax][c]{l's}
p_i-c_i,&if $v_i\in W$\\
0,&if $v_i\notin W$
\end{IEEEeqnarraybox}
\right.
\end{equation}
The profit of the $TA$ is
\begin{equation}
P(W)=A(W)-\sum\nolimits_{v_i\in W}p_i
\end{equation}
where we denote by $A(W)=\sum_{t_j\in\cup_{v_i\in W}T_i}a_j$. Besides, in order to control cost, the $TA$ has a total budget $B$ such that $\sum_{v_i\in W}b_i\leq B$. Based on that, we want to design an incentive mechanism to execute this auction process.

\begin{figure}[!t]
	\centering
	\includegraphics[width=\linewidth]{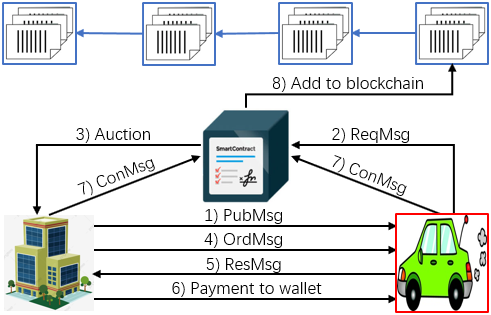}
	\caption{The traffic information trading framework between TA and vehicles.}
	\label{fig2}
\end{figure}

\section{Lightweight Blockchain Design}
In a smart city, the TA publishs a task set first, then those active vehicles in this city request to undertake part of tasks. If a vehicle's request is accepted by the TA, it will get a certain number of rewards. With the help of the blockchain technology, the TA can trade the traffic information with the active vehicles in a decentralized, secure, and verifiable manner. In the meantime, a budgeted auction mechanism is written in the smart contract. When a new round of trading begins, the smart contract in our BRTM system will be deployed and executed automatically in order to find the optimal profit for the $TA$ according to the vehicles' requests. 

\subsection{System Initialization}
Consider a smart city $S=(TA,V)$, each vehicle $v_i\in V$ has to register with the certificate authority (CA) through correlating its license plate so as to acquire a unique identification $ID_i$. Then, each legitimate vehicle will be assigned with a private/public key pair $(SK_i,PK_i)$, where the private key is reserved by itself and the public key is known by all the legitimate nodes in the blockchain as a pseudonym. The asymmetric encryption are adopted widely in the current blockchain technology for the sake of data integrity. The data encrypted by the private (resp. public) key can be decrypted by the public (resp. private) key. After registering, the vehicle $v_i$ will obtain a certificate $Cer_i$ signed by CA's private key that can certify the authenticity of its identity. There is an account $A_i=\{ID_i,Cer_i,(SK_i,PK_i),B_i\}$ associated with the $v_i\in V$ where $B_i$ is its data coin balance. Similarly, the $TA$ has an account denoted by $A_*=\{ID',Cer',(SK',PK'),B',Rep'\}$ as well, where $Rep'$ is its reputation value which will be introduce later. If a round of trading is finished successfully, data coins will be transferred from $B_*$ to $B_i$.

\subsection{Information Trading Framework}
The elaborate process of traffic information trading in our BRTM system are drawn in Fig. \ref{fig2}, whose specific operation steps are summarized as follows:

\textbf{1) }The $TA$ publishs its task set $T$ across the active vehicles in this city. The published message can be denoted by $PubMsg=\{SK'(T),Cer',STime\}$, where is the task set is encrypted by the $TA$'s private key for security and $STime$ is the time stamp of this message generation.

\textbf{2) }After receiving the $PubMsg$ from the $TA$, each active vehicle $v_i\in V$ in this city has to verify the $TA$'s certification $Cer'$ by the CA's public key and then read the task set by the $TA$'s public key. Once verified, each $v_i$ will determine its task-bid pair $(T_i,b_i)$ and send its request message $ReqMsg$, denoted by $ReqMsg=\{PK'(T_i,b_i),Cer_i,STime\}$, back to the $TA$. Here, its task-bid pair is encrypted by the $TA$'s public key because it can only be read by the $TA$ for privacy.

\textbf{3) }The $TA$ waits to collect all the $ReqMsg$ from the active vehicles in this city, then determine the winner set $W$ and its corresponding payment $p_i$ for $v_i\in W$. This process is executed by the built-in smart contract, which adopts a budgeted auction mechanism explained in Sec. \uppercase\expandafter{\romannumeral5}.

\textbf{4) }The $TA$ sends the order message to each accepted vehicle $v_i\in W$, which can be denoted by $OrdMsg=\{PK_i(p_i),Cer',STime\}$. Here, the payment is encrypted by the $v_i$'s public key which can be read by the $v_i$ merely.

\textbf{5) }If receiving the $OrdMsg$ from the TA, it means that the $v_i$ has been selected as an information provider. Then, the $v_i$ can read it by its private key and send the response message that includes the traffic information associated with task $T_i$, denoted by $ResMsg=\{PK'(data),Cer_i,STime\}$, back to the $TA$. The data is encrypted by the $TA$'s public key due to the same reason as (2).

\textbf{6) }After the data transmission, the $TA$ will check and confirm whether the traffic information about task $T_i$ from the $v_i$ meets its requirement. If yes, the $TA$ pays the data coins $p_i$ to the $v_i$'s public wallet address.

\textbf{7) }Once receiving the payment from the $TA$, the $v_i$ will sends the confirm message, denoted by $ConMsg=\{PK'(Confirm),Cer_i,STime\}$, back to the $TA$. The $TA$ will sign it with a confirm message as well, then a transaction record is formulated until now.

\textbf{8) }The built-in smart contract in the $TA$ will record a series of transactions between $TA$ and vehicles in this city within a period of time, and package them into a new block. Before this block is added into the blockchain, it has to be reached a consensus among all the TAs in the BRTM system. The consensus process we adopt here is called as reputation-based delegated proof-of-work (DPoS) mechanism explained in Sec. \uppercase\expandafter{\romannumeral4}.C. After finishing the consensus process, this block will be appended into the blockchain in a linear and chronological order permanently.

\subsection{Reputation-based DPoS Mechanism}
A consensus process is essential to guarantee the consistency and security for the blockchain system. The classic PoW consensus mechanism is not suitable to our system because of lacking the necessary computational power in the vehicular networks. Therefore, we design a novel reputation-based DPoS mechanism that not only ensures reliability but also improves performance. Just as said before, all the TAs in the BRTM system $\mathbb{S}$ are considered as the full nodes, thereby the consensus process can be carried out among them. The operational steps are displayed as follows:

\textbf{1) Witness election: }Denote by the full node set $\mathbb{Z}=\{TA_1,TA_2,\cdots,TA_a,\cdots\}$, there is a reputation value $Rep_a$ associated with each full node $TA_a\in\mathbb{Z}$. This reputation $Rep_a$ can be considered as its stake that determines its voting weight directly. It implies that the nodes with higher reputation values can determine the results more than those with lower reputation values. At each witness election epoch, the voting result $R_a$ for each $TA_a\in\mathbb{Z}$ is defined as
\begin{equation}
	R_a=\sum\nolimits_{TA_{a'}\in\mathbb{Z}\backslash\{TA_{a}\}}Rep_{a'}\cdot\mathbb{I}(a,a')
\end{equation}
where the $\mathbb{I}(a,a')$ is an indicator. Here, $\mathbb{I}(a,a')=1$ if the $TA_{a'}$ votes to support $TA_a$, otherwise $\mathbb{I}(a,a')=0$. All these reputation values and voting results are stored in the blockchain, thereby each node can check them by itself. We select a witness committee $\mathbb{M}\subseteq\mathbb{Z}$ that have the top $|\mathbb{M}|$ highest voting results. Then, it can be divided into two part such that $\mathbb{M}=\mathbb{D}\cup\mathbb{E}$ and $\mathbb{D}\cap\mathbb{E}=\emptyset$, where the full node in $\mathbb{D}$ is an active witness and in $\mathbb{E}$ is a standby witness. The active witnesses have the top $|\mathbb{D}|$ highest voting results, which are able to generate a new block like a leader. However, the standby witnesses can only verify and broadcast the block generated by an active witness.

\textbf{2) Block generation: }First, the active witnesses in the set $\mathbb{D}$ are sorted in a random sequence. Then, each $TA_a\in\mathbb{D}$ takes turn to be a leader round by round based on this sorting that is responsible for generating a new block. The leader has to verify and check all the transactions occurred recently and package those valid transactions into a new block. If a node in its turn does not produce a block over a given period of time successfully, it will be skipped and have no chance to be leader again in this election epoch. After passing $|\mathbb{D}|$ consensus rounds, namely an election epoch, we re-execute the witness election to elect a new witness committee $\mathbb{M}$ according to the new reputation values of the full nodes in $\mathbb{Z}$.

\textbf{3) Consensus process: }After producing a new block, the leader should broadcast it to all witnesses in $\mathbb{M}$. All witnesses should verify the leader's identity and check this block, then broadcast their verification results signed by their private key. Next, each witness $TA_a\in\mathbb{M}$ compares its verified outcome with those from other witnesses. If confirmed, it will send a confirmation message to the leader, which means that this witness agrees to accept the new block. For the leader, it will append the new block into the blockchain if the proportion of witnesses that agree to accept this block is more than two thirds. All the full nodes should synchronize their local blockchain storage according to the longest chain principle and all the lightweight node should update their blockchain headers based on the newest one.

\textbf{4) Reputation update: }When a consensus round is finished or interrupted, we need to update the reputation values of the full nodes in $\mathbb{Z}$ according to their behaviors during this consensus round. Generally, the positive behaviors contribute to the accumulation of reputation, but the negative behaviors had a bad effect. We denote by $Rep_a(i)$ the repuation value of the full node $TA_a\in\mathbb{Z}$ at the $i$-th consensus round. At the $i$-th consensus round, we define
\begin{equation}
	\Delta_a(i)=A\cdot\alpha_a(i)+B\cdot\beta_a(i)+C\cdot\gamma_a(i)
\end{equation}
for each full node $TA_a\in\mathbb{Z}$. Here, if the $TA_a\in\mathbb{Z}$ participates the voting in witness election, we give $\alpha_a(i)=1$, otherwise $\alpha_a(i)=-1$. When the $TA_a$ is the leader at this consensus round, we give $\beta_a(i)=1$ if it generates a new block that is accepted by other witnesses eventually, otherwise $\beta_a(i)=-1$. If the $TA_a\in\mathbb{Z}\backslash\{\text{leader}\}$ is not the leader, we give $\beta_a(i)=0$. When the $TA_a\in\mathbb{M}\backslash\{\text{leader}\}$ is a member of witness committee but not the leader, we give $\gamma_a(i)=1$ if it verifies the block produced by the leader correctly, otherwise $\gamma_a(i)=-1$. If the $TA_a\in\mathbb{Z}\backslash\mathbb{M}$ is not a witness, we give $\gamma_a(i)=0$. Then, the $A$, $B$, and $C$ are three adjustable parameters that represent the rewards or punishments of voting, leader, and verification behaviors. Based on its behaviors $\Delta_a(i)$ at the $i$-th consensus round, we define the reputation value of the full node $TA_a\in\mathbb{Z}$ as
\begin{equation*}
Rep_a(i)=\left
\{\begin{IEEEeqnarraybox}[\relax][c]{l's}
\max\{1,Rep_a(i-1)+\Delta_a(i)\},&if $\Delta_a(i)\geq 0$\\
\min\{0,Rep_a(i-1)+\Delta_a(i)\},&if $\Delta_a(i)< 0$
\end{IEEEeqnarraybox}
\right.
\end{equation*}
where the $TA_a$'s voting is meaningless when its reputation down to zero according to the (3). Moveover, we initialize the reputation $Rep_a(0)=0.5$ for each $TA_a\in\mathbb{Z}$ and make $B>C>A>0$ because of their importance.

\subsection{Performance and Reliability}
Our BRTM system inherits the advantages of blockchain technology which helps to ensure the reliability of traffic information trading and privacy protection. Its main characteristics are summarized as follows: (1) Decentralization: the traffic information trading can be performed in a distributed manner without relying on the third trusted intermediaries, which avoids single point failure; (2) Reliability: when there is a node failing because of malicious attacks, the blockchain network can be guaranteed to work as usual; (3) Privacy protection: shown as Sec. \uppercase\expandafter{\romannumeral4}.B, asymmetric encryption is used to prevent the message from reading by others, which makes the private message remain confidential during the auction process; (4) Efficiency: our reputation-based DPos mechanism not only makes the consensus process more reliable but also improves efficiency by screening out the trustworthy nodes in advance; (5) Transaction authentication: all transaction must be checked and verified by the witness before appending into the blockchain, thereby it is extremely hard to dominate the majority of witnesses to create an unreal block. Based on that, our BRTM system provides us with a secure, reliable, and efficient traffic information trading platform.

\section{Budgeted Auction Incentive mechanism} 
In this section, we consider how to model the trading process between $TA$ and $v_i\in V$ in a smart city $S=(TA,V)$. Auction mechanism is an execellent theoretical tool for this scenario. Here, the sellers are $V=\{v_1,v_2,\cdots,v_n\}$ and the buyer is $TA$. Each seller $v_i$ submits a task-bid pair $(T_i,b_i)$, and then the buyer determine whether to accept its request. If accept it, the buyer need to offer a payment. We call such an auction as ``reverse auction''. Thus, the single round auction mechanism can be formulated as Problem 1.
\begin{pro}
	The active vehicles in $S$ given a task-bid pair vector $((T_1,b_1),(T_2,b_2),\cdots,(T_n,b_n))$, the $TA$ need to compute an allocation $W$ and payments $\{p_i\}_{v_i\in W}$ such that
	\begin{equation}
		\max_{W\subseteq V}P(W) \text{ s.t. } \sum\nolimits_{v_i\in W}b_i\leq B
	\end{equation}
	where it satisfies individual rationality, profitability, truthfulness, and computational efficiency.
\end{pro}
\noindent
The auction we use in this section is a single sealed-bid auction, thereby the task-bid pair given by each seller is private and not known to other sellers in order to avoid collaborating or forming an alliance. Once submitted, any seller can not change its task or bid during the auction. This is why we use asymmetric encryption in Sec. \uppercase\expandafter{\romannumeral4}.B.

\begin{defn}[Individual rationality] 
	A reverse auction is individual rational if there is no winning sellers getting less than its cost.
\end{defn}
For each vehicle $v_j\in V$, it must have an non-negative utility that is $p_i\geq c_i$ if $v_i\in W$.
\begin{defn}[Profitability]
	The profit of an reverse auction is the difference between the value generated by winner set and the payment to the sellers, which is nonnegative.
\end{defn}
For the $TA$ in $S$, it has an allocation $W$ such that $\sum\nolimits_{t_j\in\cup_{v_i\in W}T_i}a_j\geq \sum\nolimits_{v_i\in W}p_i$.
\begin{defn}[Truthfulness]
	A reverse auction is truthful if every seller's bid, which is the same as its truthful cost, is the dominant strategy that maximizes its utility.
\end{defn}
For each vehicle $v_i\in V$, it means that the $v_i$ cannot increase its utility by giving a bid $b_i$ that is different from its truthful cost $c_i$ no matter what others' bids are. Truthfulness is a very important property for an auction mechanism to avoid malicious price manipulation as well as ensure a fair and benign market competition environment. In our case, if a vehicle can obtain a better utility when offering an untruthful bid, then those malicious vehicles are able to cheat in the auction that benefit themselves but hurt the interests of others. By making all active vehicles bidding truthfully, the $TA$ can allocate its tasks to the most suitable vehicles. Therefore, truthfulness plays a very significant role in our auction design.
\begin{defn}[Computational efficiency]
	A reverse auction is efficient if it can be done in ploynomial time.
\end{defn}

\subsection{Submodular Maximization}
Assume the $TA$ gives a payment $p_i$ with $p_i=b_i$ for each vehicle $v_i\in V$, the Problem 1 can be reduced to an optimization problem, called ``budgeted vehicle allocation problem'', that selects a winner set $W\subseteq V$ such that $\bar{P}(W)$ is maximized. We denote
\begin{equation}
	\bar{P}(W)=A(W)-\sum\nolimits_{v_i\in W}b_i
\end{equation}
where $\sum_{v_i\in W}b_i\leq B$. To be meaningful, we assume $\bar{P}(\emptyset)=0$ and it exists at least one vehicle $v_k\in V$ such that $\bar{P}(\{v_k\})>0$ with $b_k\leq B$. However, the budgeted vehicle allocation problem can be reduced to the classic set cover problem in polynomial time, thereby it is NP-hard to find the optimal solution definitely. Based on that, we have to seek help from designing an approximation algorithm which will make use of submodularity of objective shown in (6).
\begin{defn}[Submodular function \cite{lovasz1983submodular}]
	Given a set function $f:2^V\rightarrow\mathbb{R}$, it is submodular if
	\begin{equation}
		f(X\cup\{u\})-f(X)\leq f(Y\cup\{u\})-f(Y)
	\end{equation}
	for any $X\subseteq Y\subseteq V$ and $u\in V\backslash Y$.
\end{defn}
\begin{lem}
	The objective function of budgeted vehicle allocation problem $\bar{P}$ is submodular.
\end{lem}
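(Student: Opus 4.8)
The plan is to decompose $\bar{P}$ into a coverage part and a linear part and treat each separately. Write $\bar{P}(W)=A(W)-\ell(W)$ where $A(W)=\sum_{t_j\in\cup_{v_i\in W}T_i}a_j$ and $\ell(W)=\sum_{v_i\in W}b_i$. First I would show that $A$ is submodular: it is a weighted set-coverage function, since each vehicle $v_i$ ``covers'' the task subset $T_i\subseteq T$ and $A(W)$ sums the appraisements $a_j$ of the tasks covered by at least one member of $W$. For $X\subseteq Y\subseteq V$ and $u\in V\setminus Y$, the marginal gain $A(X\cup\{u\})-A(X)$ equals $\sum_{t_j\in T_u\setminus(\cup_{v_i\in X}T_i)}a_j$, i.e.\ the total appraisement of tasks that $u$ brings in that $X$ did not already cover. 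Because $\cup_{v_i\in X}T_i\subseteq\cup_{v_i\in Y}T_i$, every task already covered by $X$ is also covered by $Y$, so $T_u\setminus(\cup_{v_i\in Y}T_i)\subseteq T_u\setminus(\cup_{v_i\in X}T_i)$; together with $a_j\ge 0$ this gives $A(X\cup\{u\})-A(X)\ge A(Y\cup\{u\})-A(Y)$, which is exactly the condition in Definition~5.

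Next I would observe that $\ell$ is modular: for any $X$ and $u\notin X$ we have $\ell(X\cup\{u\})-\ell(X)=b_u$, independent of $X$, so in particular $\ell(X\cup\{u\})-\ell(X)=\ell(Y\cup\{u\})-\ell(Y)$. Finally, combine the two facts. For $X\subseteq Y\subseteq V$ and $u\in V\setminus Y$,
\begin{equation*}
\bar{P}(X\cup\{u\})-\bar{P}(X)=\big(A(X\cup\{u\})-A(X)\big)-b_u\ge\big(A(Y\cup\{u\})-A(Y)\big)-b_u=\bar{P}(Y\cup\{u\})-\bar{P}(Y),
\end{equation*}
so $\bar{P}$ satisfies the diminishing-returns inequality and is submodular. (One may also remark that $\bar{P}$ is in general non-monotone, since subtracting $b_u$ can make a marginal gain negative, which is consistent with the non-monotone submodular maximization framing used earlier.)

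I do not expect a genuine obstacle here; the statement is essentially a structural bookkeeping fact. The only point requiring care is the coverage-function step: one must phrase the marginal gain of adding $u$ strictly in terms of the already-covered task set $\cup_{v_i\in X}T_i$ versus $\cup_{v_i\in Y}T_i$, rather than in terms of $X$ and $Y$ directly, so that the monotonicity of the union in the seller set does the work. The nonnegativity assumption on the appraisements $a_j$ is what guarantees the inequality points the right way, and the modularity of $\ell$ ensures that subtracting the budget term preserves submodularity rather than merely leaving it in the broader class of DR-submodular-minus-modular functions.
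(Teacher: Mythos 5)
Your proposal is correct and follows essentially the same route as the paper: reduce submodularity of $\bar{P}$ to submodularity of the weighted coverage function $A$ (the linear bid term being modular and hence irrelevant), and verify the diminishing-returns inequality via $T_u\setminus(\cup_{v_i\in Y}T_i)\subseteq T_u\setminus(\cup_{v_i\in X}T_i)$ together with $a_j\ge 0$. The only differences are cosmetic — you spell out the modularity of $\ell$ explicitly, and you write the diminishing-returns inequality in its standard orientation ($\ge$), which is the direction the paper's own computation actually establishes even though its Definition~5 states the inequality with the opposite sign.
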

\begin{proof}
	According to the (6), to show $\bar{P}(X\cup\{v_i\})-\bar{P}(X)\leq \bar{P}(Y\cup\{v_i\})-\bar{P}(Y)$ is equivalent to show $A(X\cup\{v_i\})-A(X)\leq A(Y\cup\{v_i\})-A(Y)$ for any $X\subseteq Y\subseteq V$ and $v_i\in V\backslash Y$. We have $A(X\cup\{v_i\})-A(X)=$
	\begin{flalign}
		&=\sum\nolimits_{t_i\in T_i\backslash\cup_{v_j\in X}T_j}a_i\geq\sum\nolimits_{t_i\in T_i\backslash\cup_{v_j\in Y}T_j}a_i\\
		&=A(Y\cup\{v_i\})-A(Y)
	\end{flalign}
	Therefore, $\bar{P}$ is submodular.
\end{proof}

As we known, Buchbinder \textit{et al.} \cite{buchbinder2015tight} designed a double greedy algorithm for the unconstrained submodular maximization problem with a $(1/3)$-approximation under the deterministic setting and a $(1/2)$-approximation under the randomized setting. Back to our budgeted vehicle allocation problem, there is a constraint $\sum_{v_i\in W}b_i\leq B$ that is a knapsack constraint. This constraint affects the resulting profit obtained by the $TA$ and the number of vehicles that are selected as winners in the auction. Thus, a valid reverse auction mechanism design are required to ensure the truthfulness carefully. Based on the aforementioned analysis, this optimization problem can be catagorized to a non-monotone submodular maximization with knapsack constraint. Lee \textit{et al.} \cite{lee2009non} proposed a $(1/5-\epsilon)$-approximation algorithm to maximize any non-negative submodular function with knapsack constraint by means of fractional relaxation and local search method. Even though this algorithm can give us a constant approximation ratio, its process is complex and its performance is worse than the greedy-heuristic algorithm actually. The greedy-heuristic algorithm is shown in Algorithm \ref{a1}.

Here, we denote by $\bar{P}(v_i|W)=\bar{P}(W\cup\{v_i\})-\bar{P}(W)$ and $b^*$ is the bid offered by vehicle $v^*$. Shown as Algorithm \ref{a1}, we select a vehicle $v^*$ from the current constrained set $F$ with maximum unit marginal gain until $\bar{P}(v^*|W)<0$ or $F=\emptyset$. Despite the winner set returned by Algorithm \ref{a1} has no any theoretical bounds because of non-monotonicity and knapsack constraint, it is intuitive and has a good performance in the practical applications. Now, we have to explore whether the greedy-heuristic satisfies the aforementioned four properties in the reverse auction mechanism.

\begin{algorithm}[!t]
	\caption{\text{greedy-heuristic}}\label{a1}
	\begin{algorithmic}[1]
		\renewcommand{\algorithmicrequire}{\textbf{Input:}}
		\renewcommand{\algorithmicensure}{\textbf{Output:}}
		\REQUIRE Function $\bar{P}$, bids $\{b_i\}_{v_i\in V}$, and budget $B$
		\ENSURE Winner set $W$
		\STATE Initialize: $W\leftarrow\emptyset$, $s\leftarrow 0$
		\STATE Initialize: $F\leftarrow\{v_i:v_i\in V\backslash W \text{ and } b_i\leq B\}$
		\WHILE {$F\neq\emptyset$}
		\STATE Select $v^*$ such that $v^*\in\arg\max_{v_i\in F}\{\bar{P}(v_i|W)/b_i\}$
		\IF {$\bar{P}(v^*|W)<0$}
		\STATE Break
		\ENDIF
		\STATE $W\leftarrow W\cup\{v^*\}$
		\STATE $s\leftarrow s+b^*$
		\STATE $F\leftarrow\{v_i:v_i\in V\backslash W \text{ and } b_i\leq B-s\}$
		\ENDWHILE
		\RETURN $W$
	\end{algorithmic}
\end{algorithm}

Next, let us look at whether the greedy-heuristic satisfies profitability, individual rationality, truthfulness, and computational efficiency one by one as follows:

\textbf{1) Individual rationality: }The $TA$ pay each vehicle in the winner set its bid, thus it is individually rational.

\textbf{2) Profitability: }Form line 5 in Algorithm \ref{a1}, the while loop is terminated when there is no vehicle having postive marginal gain, which guarantee $\bar{P}(W)>0$.

\textbf{3) Truthfulness: }Let us consider an example. There are five tasks issued by the $TA$, denoted by $T=\{t_1,t_2,t_3,t_4,t_5\}$, with appraisement $\{a_1=2,a_2=3,a_3=4,a_4=2,a_5=5\}$ and the budget $B=5$. There are three active vehicles denoted by $V=\{v_1,v_2,v_3\}$ in this city. They want to undertake the $TA$'s tasks as $T_1=\{t_1,t_3,t_5\}$, $T_2=\{t_1,t_2,t_5\}$, $T_3=\{t_3,t_4,t_5\}$, $c_1=2$, $c_2=2$, and $c_3=2$. When each vehicle offers a bid truthfully, we have $\bar{P}(\{v_1\}|\emptyset)/b_1=(A(\{v_1\})-b_1)/b_1=(11-2)/2=4.5$. Similarly, we have $\bar{P}(\{v_2\}|\emptyset)/b_2=4$ and $\bar{P}(\{v_3\}|\emptyset)/b_3=3.5$. Thus, vehicle $v_1$ is selected at the first iteration. At the second iteration, we have $\bar{P}(v_2|\{v_1\})/b_2=0.5$ and $\bar{P}(v_3|\{v_1\})/b_3=0$. Thus, vehicle $v_2$ is selected at the second iteration. Then, the greedy-heuristic terminates because the budget is exhausted. However, when vehicle $v_2$ offer an untruthful bid $b_2=c_2+\lambda$, we have $\bar{P}(v_2|\{v_1\})/b_2=3/(2+\lambda)-1$. If $\lambda\in(0,1)$, it will be selected as the winner at the second iteration. The algorithm terminates here and vehicle $v_2$ can get more payment by offering an untruthful bid, thus is does not satisfy truthfulness.

\textbf{4) Computational efficiency: }The running time of the greedy-heuristic is $O(Bnm/\min_{v_i\in V}\{b_i\})$ because it takes $O(nm)$, $|T|=m$ and $|V|=n$, to compute $\bar{P}$ and iterates at most $B/\min_{v_i\in V}\{b_i\}$ times, so computationally efficient.

\begin{algorithm}[!t]
	\caption{\text{TBSAP Algorithm}}\label{a2}
	\begin{algorithmic}[1]
		\renewcommand{\algorithmicrequire}{\textbf{Input:}}
		\renewcommand{\algorithmicensure}{\textbf{Output:}}
		\REQUIRE Function $\bar{P}$, bids $\{b_i\}_{v_i\in V}$, and budget $B$
		\ENSURE Winner set $W$
		\STATE // Winner allocation stage
		\STATE Initialize: $X\leftarrow\emptyset$, $s\leftarrow 0$
		\WHILE {$X\neq V$}
		\STATE Select $v^*$ such that $v^*\in\arg\max_{v\in V\backslash X}\{\hat{P}(v|X)\}$
		\IF {$\hat{P}(v^*|X)<0$ or $s+b^*>B$}
		\STATE Break
		\ENDIF
		\STATE $X\leftarrow X\cup\{v^*\}$
		\STATE $s\leftarrow s+b^*$
		\ENDWHILE
		\STATE // Payment determination stage
		\FOR {each $v_i\in X$}
		\STATE Initialize: $Y_0\leftarrow\emptyset$, $s\leftarrow 0$, $j\leftarrow 0$
		\STATE Initialize: $p_i\leftarrow -\infty$
		\WHILE {$Y_j\neq V_{-i}$}
		\STATE $v_{i_{j+1}}\leftarrow\arg\max_{v\in V_{-i}\backslash Y_j}\{\hat{P}(v|Y_j)\}$
		\IF {$\hat{P}(v_{i_{j+1}}|Y_j)<0$ or $s+b_{i}>B$}
		\STATE Break
		\ENDIF
		\STATE $Y_{j+1}\leftarrow Y_j\cup\{v_{i_{j+1}}\}$
		\STATE $s\leftarrow s+b_{i_{j+1}}$
		\STATE $p_i\leftarrow\max\left\{p_i,b_{i_{j+1}}\cdot\frac{A(v_i|Y_j)}{A(v_{i_{j+1}}|Y_j)}\right\}$
		\STATE $j\leftarrow j+1$
		\ENDWHILE
		\IF {$s+b_i\leq B$}
		\STATE $p_i\leftarrow\max\{p_i,A(v_i|Y_j)\}$
		\ENDIF
		\ENDFOR
		\RETURN $X$ and $\{p_i\}_{v_i\in X}$
	\end{algorithmic}
\end{algorithm}

\subsection{Truthful Auction Mechanism Design}
As mentioned above, the greedy-heuristic algorithm is not meaningful due to lacking truthfulness even though it simple to implement. Thus, we have to design a budgeted reverse auction mechanism that not only gets a good profit by encouraging vehicles to undertake the tasks, but also satisfies the four properties shown as before, especially for truthfulness, in order to protect from manipulating this system malignantly by offering a unreal bid. Therefore, a valid auction mechanism needs to be designed although it may lose some of its profits. We propose a truthful budgeted reverse auction mechanism based on Myerson's introduction \cite{nisan2007}.
\begin{thm}[\cite{nisan2007}]
	An reverse auction mechanism is truthfully if and only if it satisfies as follows: (1) Monotonicity: If a vehicle (seller) $v_i\in V$ wins by its task-bid pair $(T_i,b_i)$, then it will win by any bid that is smaller than $b_i$ with the same task set $T_i$ as well. Namely, the $(T_i,b^\circ_i)$ will win by any bid $b^\circ_i<b_i$ when other sellers do not change their strategies; (2) Critical payment: The payment $p_i$ of a winner $v_i$ with its task-bid pair $(T_i,b_i)$ is the maximum bid with which the $v_i$ can win. Namely, the $(T_i,b^\circ_i)$ will not win by any bid $b^\circ_i>p_i$ when other sellers do not change their strategies.
\end{thm}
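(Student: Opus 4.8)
The plan is to fix an arbitrary seller $v_i$, freeze the task-bid pairs of every other seller, and thereby reduce the claim to a single-parameter statement: given the other reports, the outcome for $v_i$ (win or lose) and its payment depend only on the one number $b_i$. Write $W_i=\{b\ge 0:\ v_i \text{ wins when it reports } (T_i,b)\}$ and recall that $v_i$'s utility is $p_i-c_i$ when it wins and $0$ otherwise, where $c_i$ is its true cost. Under this reduction the two conditions in the theorem translate to: $W_i$ is downward closed (monotonicity) and the payment of a winner equals $\theta_i:=\sup W_i$ (critical payment). I will assume, as is standard, that the cost domain is rich enough (an interval of nonnegative reals) that the particular ``bad'' true costs used in the contradiction arguments below are admissible.

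For the ``if'' direction I would first note that monotonicity together with the critical-payment rule forces the payment of a winner to be the constant $\theta_i$, independent of its own winning bid, so $v_i$'s utility is $\theta_i-c_i$ from any winning report and $0$ from any losing report. A short case split then finishes it: if $c_i\le\theta_i$, truthful reporting wins and yields $\theta_i-c_i\ge 0$, while any other report either still wins (same utility) or loses (utility $0\le\theta_i-c_i$); if $c_i>\theta_i$, truthful reporting loses and yields $0$, while any winning deviation yields $\theta_i-c_i<0$. Hence reporting $b_i=c_i$ is a (weakly) dominant strategy.

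For the ``only if'' direction I would extract the three facts in order. (i) \emph{Payment is constant on $W_i$}: if $v_i$ could win with reports $b^1,b^2\in W_i$ at payments $p^1>p^2$, a seller whose true cost is $b^2$ would strictly prefer reporting $b^1$ to reporting truthfully, contradicting truthfulness; write $p_i$ for the common payment. (ii) \emph{Monotonicity}: if $v_i$ wins with $b$ but loses with some $b'<b$, then taking the true cost to be $b'$ forces $p_i\le b'$ (else deviating from the truthful, losing report to the winning report $b$ is profitable); now taking the true cost to be $b>p_i$ gives a contradiction, since truthful reporting then wins at negative utility $p_i-b<0$ whereas reporting $b'$ (a losing bid) yields $0$. (iii) \emph{Critical payment}: if $p_i<\theta_i$, a seller whose true cost lies strictly between $p_i$ and $\theta_i$ prefers to over-report and lose (utility $0$) rather than win at a loss; if $p_i>\theta_i$, a seller whose true cost lies strictly between $\theta_i$ and $p_i$ prefers to under-report into $W_i$ and win at profit $p_i-c_i>0$ rather than lose; both contradict truthfulness, so $p_i=\theta_i=\sup W_i$. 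Combining (ii) and (iii) yields exactly the restated consequences in the theorem: $(T_i,b_i^\circ)$ wins for every $b_i^\circ<b_i$ whenever $(T_i,b_i)$ wins, and $(T_i,b_i^\circ)$ does not win for any $b_i^\circ>p_i$.

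The main obstacle I anticipate is not any single deep step but keeping the inequality directions straight in the reverse-auction (procurement) setting, where a \emph{lower} bid should win and a winner is paid its threshold ``from above'' rather than ``from below''. In particular one must be careful about whether $W_i$ is closed or half-open at $\theta_i$; the argument above is arranged so that the critical-payment condition pins down $p_i=\sup W_i$ regardless of how that boundary case is resolved. A secondary subtlety is checking, in each contradiction, that the hypothesized true cost is realizable and that the deviating report indeed lies inside (or outside) $W_i$ as needed — all of which is clean once monotonicity has been established, which is why the three facts must be proved in the stated order.
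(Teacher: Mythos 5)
Your proof is correct, but note that the paper itself does not prove this statement at all: Theorem~1 is imported verbatim from the algorithmic game theory literature (the citation to \cite{nisan2007}) and is used as a black box to justify the two-stage design of the TBSAP mechanism, so there is no in-paper argument to compare against. What you have written is the standard Myerson-style characterization for single-parameter domains, correctly transposed to the procurement (reverse-auction) setting: the reduction to the winning set $W_i$ with the other reports frozen, the observation that truthfulness forces a bid-independent payment on $W_i$, the downward-closure argument, and the pinning of the payment to $\sup W_i$ are all sound, and the two directions are each handled completely. The minor caveats you flag yourself are the right ones --- whether $\theta_i$ itself lies in $W_i$ is immaterial because the utility at the boundary is zero either way, and the contradiction in the case $p_i<\theta_i$ implicitly needs a losing bid to exist (i.e., $W_i\neq[0,\infty)$), which holds here because the budget constraint guarantees sufficiently large bids lose. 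One further point worth making explicit if you wanted this to stand as a rigorous proof: the ``only if'' direction quantifies over all admissible true costs $c_i$, so the dominant-strategy property must be assumed to hold for every type in an interval, exactly the richness assumption you state at the outset. As a reconstruction of a cited result, this is a more self-contained treatment than the paper offers, at the cost of length; the paper's choice to cite rather than prove is defensible since the result is classical.
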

Based on the Theorem 1, we design our budgeted reverse auction mechanism consisted of the winner allocation stage and payment determination stage. From the bids $\{b_i\}_{v_i\in V}$, we denote by the unit marginal gain $\hat{P}$
\begin{equation}
	\hat{P}(v_i|X)={\bar{P}(v_i|X)}/{b_i}=[{\bar{P}(X\cup\{v_i\})-\bar{P}(X)}]/{b_i}
\end{equation}
The winner allocation stage selects vehicles from $V$ in a greedy approach. All active vehicles in $V$ are sorted in a non-increasing sequence as
\begin{equation}
	\hat{P}(v_1|X_0)\geq\cdots\geq\hat{P}(v_{i}|X_{i-1})\geq\cdots\geq\hat{P}(v_{n}|X_{n-1})
\end{equation}
where the $i$-th vehicle has the maximum unit marginal gain $\hat{P}(v_{i}|X_{i-1})$ over $V\backslash X_{i-1}$, $X_{i-1}=\{v_1,v_2,\cdots,v_{i-1}\}$, and $X_0=\emptyset$. From this sorting, we select the maximum $L$ with $L\leq n$ such that $\hat{P}(v_L|X_{L-1})\geq 0$ and $\sum_{v_i\in X_L}b_i\leq B$ where $X_L=\{v_1,v_2,\cdots,v_L\}$.

According to the winner set $X_L$, we have to determine the payment price $p_i$ for each vehicle $v_i\in X_L$. The reverse auction mechanism runs the winner allocation process repreatedly. Consider the vehicle $v_i\in V$, all active vehicles in $V_{-i}=V\backslash\{v_i\}$ are sorted in a non-increasing order as follows:
\begin{equation}
\hat{P}(v_{i_1}|Y_0)\geq\hat{P}(v_{i_2}|Y_{1})\geq\cdots\geq\hat{P}(v_{i_{n-1}}|Y_{n-2})
\end{equation}
where the $i_j$-th vehicle has the maximum unit marginal gain $\hat{P}(v_{i_j}|Y_{j-1})$ over $V\backslash Y_{j-1}$, $Y_{j-1}=\{v_{i_1},v_{i_2},\cdots,v_{i_{j-1}}\}$, and $Y_0=\emptyset$. From this sorting, we select the maximum $L_i$ with $L_i\leq n-1$ such that $\hat{P}(v_{i_{L_i}}|Y_{L_i-1})\geq 0$ and $\sum_{v_{i_j}\in Y_{L_i-1}}b_{i_j}+b_i\leq B$ where $Y_{L_i-1}=\{v_{i_1},v_{i_2},\cdots,v_{i_{L_i-1}}\}$. In short, there is a list $Y_{L_i}=\{v_{j_1},v_{j_2},\cdots,v_{i_{L_i}}\}$ associated with the vehicle $v_i$. For each position $i_{j+1}$ in this list $Y_{L_i}$, we can get the maximum bid $b_{i(j+1)}'$ that the vehicle $v_i$ should offer in order to replace $v_{i_{j+1}}$ with $v_i$ at the position $i_j$. To achieve it, we have $\hat{P}(v_i|Y_j)\geq\hat{P}(v_{i_{j+1}}|Y_j)$, which is equivalent to $[A(v_i|Y_j)-b_i]/b_i\geq[A(v_{i_{j+1}}|Y_j)-b_{i_{j+1}}]/b_{i_{j+1}}$. Thus, we have the following inequality:
\begin{equation}
	b_{i(j+1)}'= b_{i_{j+1}}\cdot{A(v_i|Y_j)}/{A(v_{i_{j+1}}|Y_j)}
\end{equation}
From here, we can achieve a list $b'_i=\{b'_{i(1)},b'_{i(2)},\cdots,b'_{i(L_i)}\}$ where each $b'_{i(j)}$ is the maximum bid to replace the vehicle $v_{i_j}$ with $v_i$ in the list $Y_{L_i}$. The vehicle $v_i$ can replace any one in $Y_{L_i}$ without exceeding the budget $B$. Consider the $v_{i_{L_i+1}}$, it can be divided into the following three cases: (1) $\sum_{v_{i_j}\in Y_{L_i}}b_{i_j}+b_i> B$; (2) $\hat{P}(v_{i_{L_i+1}}|Y_{L_i})< 0$; and (3) $L_i=n-1$ where $v_{i_{L_i+1}}$ does not exist and can be considered as a virtual vehicle. For the case (1), the $v_{i_{L_i+1}}$ cannot be replaced by the $v_i$ because of the budget constraint, thereby we do not need to do anything. If not case (1), for the cases (2) and (3), the bid by the $v_i$ should be less than $A(v_i|Y_{L_i})$ so as to replace the $v_{i_{L_i+1}}$, thereby the maximum bid to replace the $v_{i_{L_i+1}}$ with $v_i$ is equal to $A(v_i|Y_{L_i})$. Therefore, for each winner $v_i\in X_L$, we have
\begin{equation}
	p_i=\left
	\{\begin{IEEEeqnarraybox}[\relax][c]{l's}
		\max\{b'_i\},\text{ if } \sum\nolimits_{v_{i_j}\in Y_{L_i}}b_{i_j}+b_i>B\\
		\max\{\max\{b'_i\},A(v_i|Y_{L_i})\},\text{ else}
	\end{IEEEeqnarraybox}
	\right.
\end{equation}
\noindent
Based on the (14), a truthful budgeted selection and pricing (TBSAP) algorithm is shown in Algorithm \ref{a2}.

\begin{lem}
	The TBSAP is individually rational.
\end{lem}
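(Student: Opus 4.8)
Individual rationality for a reverse auction asks that every winner be paid at least what it reports, that is, $p_i \geq b_i$ for each $v_i \in X$ (so a truthful bidder, with $b_i = c_i$, realizes non-negative utility). The plan is to verify this inequality directly from the two stages of Algorithm \ref{a2}. Fix a winner $v_i \in X = X_L$, where $v_1,\dots,v_n$ is the non-increasing order produced by the winner-allocation stage and $X_j = \{v_1,\dots,v_j\}$, so $i \leq L$. Two consequences of $v_i$ being a winner will be used throughout: $\hat{P}(v_i|X_{i-1}) \geq \hat{P}(v_L|X_{L-1}) \geq 0$ by the monotone sorting, and $\sum_{v\in X_{i-1}} b_v + b_i = \sum_{v\in X_i} b_v \leq \sum_{v\in X_L} b_v \leq B$. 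I would also fix a tie-breaking rule (say, by vehicle index) applied consistently in every $\arg\max$, and state it as a standing convention.

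The conceptual core is a coupling between the payment-stage greedy run over $V_{-i}$ and the allocation-stage greedy over $V$. First I would show by induction on $j$ that, for $j = 0,1,\dots,i-1$, the payment loop for $v_i$ reaches the state $Y_j = X_j$ without breaking: for $j < i-1$ the maximiser of $\hat{P}(\cdot|X_j)$ over $V\setminus X_j$ is $v_{j+1} \neq v_i$, hence it is still the maximiser over the smaller set $V_{-i}\setminus X_j$; the sign test does not fire since $\hat{P}(v_{j+1}|X_j) \geq \hat{P}(v_L|X_{L-1}) \geq 0$; and the budget test $s + b_i > B$ does not fire since $s + b_i = \sum_{v\in X_j} b_v + b_i \leq \sum_{v\in X_i} b_v \leq B$. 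Consequently the payment loop arrives at its $i$-th iteration with $Y_{i-1} = X_{i-1}$ and $s = \sum_{v\in X_{i-1}} b_v$, and the same estimate shows the budget test still cannot fire there.

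Finally I would case on the $i$-th iteration of the payment loop, letting $u$ be the vehicle it selects, $u \in \arg\max_{v\in V_{-i}\setminus X_{i-1}} \hat{P}(v|X_{i-1})$. If $\hat{P}(u|X_{i-1}) \geq 0$, the loop executes the update $p_i \leftarrow \max\{p_i,\, b_u\, A(v_i|X_{i-1})/A(u|X_{i-1})\}$; since $v_i$ maximises $\hat{P}(\cdot|X_{i-1})$ over $V\setminus X_{i-1} \ni u$, the inequality $\hat{P}(v_i|X_{i-1}) \geq \hat{P}(u|X_{i-1})$ rearranges (dividing by $A(u|X_{i-1}) \geq b_u > 0$) to $b_i \leq b_u\, A(v_i|X_{i-1})/A(u|X_{i-1}) \leq p_i$. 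Otherwise $\hat{P}(u|X_{i-1}) < 0$, or the $i$-th iteration never occurs because $X_{i-1} = V_{-i}$; in either situation the loop ends with its current set equal to $X_{i-1}$ while $s + b_i \leq B$, so the post-loop assignment $p_i \leftarrow \max\{p_i,\, A(v_i|X_{i-1})\}$ runs, and $A(v_i|X_{i-1}) \geq b_i$ follows from $\hat{P}(v_i|X_{i-1}) \geq 0$. In every case $p_i \geq b_i$, and since $v_i \in X$ was arbitrary the mechanism is individually rational. I expect the main obstacle to be the careful bookkeeping of the inner loop's three exit modes (budget exhaustion, negative marginal gain, and running out of $V_{-i}$) together with checking that the post-loop $\max$ with $A(v_i|X_{i-1})$ is invoked in exactly the cases not already dominated by an in-loop update; the coupling argument above is what reduces everything else to routine checking.
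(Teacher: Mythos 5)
Your proof is correct and follows essentially the same route as the paper's: both compare the winner $v_i$ at position $i$ with its replacement $v_{i_i}$ in the sorting over $V_{-i}$, using the identity $Y_j = X_j$ for $j<i$ to derive $b_i \leq b_{i_i}\cdot A(v_i|Y_{i-1})/A(v_{i_i}|Y_{i-1}) \leq p_i$ in the first case and $b_i \leq A(v_i|X_{i-1}) \leq p_i$ via the post-loop assignment in the second. Your version is merely more explicit about the induction establishing the coupling and about the inner loop's exit modes, which the paper asserts without detail.
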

\begin{proof}
	Let $v_{i_i}$ be the vehicle $v_i$'s replacement which appears in the $i$-th position in the sorting (12) over $V_{-i}$. Because the vehicle $v_{i_i}$ cannot be in the $i$-th position when the winner $v_i$ joins in this sorting. Thus, if $i\leq L_i$, we have $\hat{P}(v_i|Y_{i-1})\geq\hat{P}(v_{i_i}|Y_{i-1})$ which results in $b_i\leq b_{i_i}\cdot A(v_i|Y_{i-1})/A(v_{i_i}|Y_{i-1})\leq p_i$. If $i>L_i$ or $i$ does not exist, we have $b_i\leq A(v_{i}|X_{i-1})$ since the $v_i$ is a winner, and $b_i\leq A(v_{i}|X_{i-1})=A(v_{i}|Y_{i-1})\leq A(v_i|Y_{L_i})\leq p_i$ because of the submodularity.
\end{proof}

\begin{lem}
	The TBSAP is profitable.
\end{lem}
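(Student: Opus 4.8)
The plan is to bound the $TA$'s total payment by the value $A(X_L)$ of the winner set $X_L=\{v_1,\dots,v_L\}$ returned by the winner--allocation stage, where the indices follow the greedy order~(11). Since the profit is $P(X_L)=A(X_L)-\sum_{v_i\in X_L}p_i$, profitability is exactly the inequality $\sum_{v_i\in X_L}p_i\le A(X_L)$. The crux is a per--winner payment bound: for every winner $v_i\in X_L$ one has $p_i\le A(v_i|X_{i-1})$, where $X_{i-1}=\{v_1,\dots,v_{i-1}\}$ is the set of winners selected before $v_i$. Granting this, and using $A(\emptyset)=0$ together with $A(v_i|X_{i-1})=A(X_i)-A(X_{i-1})$, the sum telescopes: $\sum_{i=1}^{L}p_i\le\sum_{i=1}^{L}\bigl(A(X_i)-A(X_{i-1})\bigr)=A(X_L)$, hence $P(X_L)\ge 0$ and TBSAP is profitable.

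To prove the per--winner bound, fix a winner $v_i$ and examine the replay of the winner--allocation greedy over $V_{-i}$ performed in the payment stage (sorting~(12), with auxiliary sets $Y_0\subseteq Y_1\subseteq\cdots$). I would first argue that this replay reproduces the original run for its first $i-1$ selections, i.e. $Y_j=X_j$ for all $j\le i-1$, so in particular $L_i\ge i-1$ and $Y_{L_i}\supseteq X_{i-1}$. Indeed, at each step $j\le i-1$ the greedy maximizer over the reduced pool is still $v_j$ (deleting $v_i$, which sits at position $i>j$, cannot change the argmax), and neither break condition of the payment loop fires before step $i$: each $v_j$ with $j\le i-1$ is a winner so $\hat{P}(v_j|X_{j-1})\ge 0$, and the budget check never trips because $\sum_{k\le j}b_k+b_i\le\sum_{k\le i}b_k\le\sum_{v_k\in X_L}b_k\le B$ by feasibility of $X_L$.

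Next I would bound every quantity whose maximum defines $p_i$ in~(14). For the candidate bid $b'_{i(k)}=b_{i_k}\cdot A(v_i|Y_{k-1})/A(v_{i_k}|Y_{k-1})$ there are two regimes. If $k\le i-1$, then $v_{i_k}=v_k$, $Y_{k-1}=X_{k-1}$, and since the original run preferred $v_k$ to $v_i$ at step $k$ we have $\hat{P}(v_k|X_{k-1})\ge\hat{P}(v_i|X_{k-1})$, which rearranges to $b_k\cdot A(v_i|X_{k-1})/A(v_k|X_{k-1})\le b_i$; hence $b'_{i(k)}\le b_i\le A(v_i|X_{i-1})$, the last inequality because $v_i$ is a winner (so $\hat{P}(v_i|X_{i-1})\ge 0$). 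If $i\le k\le L_i$, then $v_{i_k}$ was actually selected in the replay, so $\hat{P}(v_{i_k}|Y_{k-1})\ge 0$, i.e. $b_{i_k}\le A(v_{i_k}|Y_{k-1})$, which gives $b'_{i(k)}\le A(v_i|Y_{k-1})\le A(v_i|X_{i-1})$, the last step by submodularity of $A$ (Lemma~1) together with $Y_{k-1}\supseteq X_{i-1}$. Finally, the only other term that can enter $p_i$ (the ``else'' branch of~(14)) is $A(v_i|Y_{L_i})$, and $A(v_i|Y_{L_i})\le A(v_i|X_{i-1})$ again by submodularity. Taking the maximum of all these quantities yields $p_i\le A(v_i|X_{i-1})$, which completes the argument.

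The main obstacle is the bookkeeping of the replay rather than the inequalities: one must nail down that $Y_j=X_j$ for $j\le i-1$ (so that submodularity applies to $Y_{k-1}\supseteq X_{i-1}$ and ``$v_i$'s would-be slot'' genuinely sits behind $X_{i-1}$), and dispose of the degenerate case in which the replay halts already at step $i$ — then $L_i=i-1$, $Y_{L_i}=X_{i-1}$, and $p_i$ collapses to $A(v_i|X_{i-1})$ through the ``else'' branch. Once the indices and these boundary cases are settled, each step is a one--line consequence of the greedy preference order and the submodularity of $A$.
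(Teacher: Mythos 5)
Your proposal is correct and follows the same skeleton as the paper's proof: write $P(X_L)=\sum_{i=1}^{L}A(v_i|X_{i-1})-\sum_{i=1}^{L}p_i$ via telescoping and reduce profitability to the per-winner bound $p_i\leq A(v_i|X_{i-1})$, established by bounding the quantities entering the maximum in (14) using $A(v_{i_k}|Y_{k-1})\geq b_{i_k}$ and the submodularity of $A$. Where you genuinely diverge is in how the candidate bids $b'_{i(k)}$ are controlled: the paper takes only the overall argmax $k$ and asserts $i\leq k$ so that $X_{i-1}\subseteq Y_{k-1}$ and submodularity applies, but it never justifies why the maximizing index cannot fall before position $i$; you instead split on $k\leq i-1$ versus $k\geq i$, and in the former regime use the greedy preference $\hat{P}(v_k|X_{k-1})\geq\hat{P}(v_i|X_{k-1})$ to get $b'_{i(k)}\leq b_i\leq A(v_i|X_{i-1})$ directly. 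This buys you a complete argument that covers a case the paper's proof silently skips, and your explicit verification that $Y_j=X_j$ for $j\leq i-1$ (including the check that neither break condition fires before step $i$) makes rigorous a prefix-agreement fact the paper only states in passing. The one point to be aware of is that the prefix-agreement argument implicitly assumes consistent tie-breaking in the argmax, an assumption the paper also makes without comment.
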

\begin{proof}
	Recall that the vehicle $v_L$ is the last one that satisfies $A(v_L|X_{L-1})\geq b_L$ in the sorting (11), we have the profit $P(X_L)=\sum_{1\leq i\leq L}A(v_i|X_{i-1})-\sum_{1\leq i\leq L}p_i$ according to the (2), which is sufficient to show that $A(v_i|X_{i-1})\geq p_i$ for each $1\leq i\leq L$. For each vehicle $v_i\in X_L$, we consider the two sub-cases shown as follows:
	
	\noindent
	\textbf{Case 1: } Recall that the vehicle $v_{i_{L_i}}$ is the last one that satisfies $\hat{P}(v_{i_{L_i}}|Y_{L_i-1})\geq 0$ and $\sum_{v_{i_j}\in Y_{L_i-1}}b_{i_j}+b_i\leq B$ in the sorting (12), we have $p_i=\max\{b'_i\}$ if $\sum_{v_{i_j}\in Y_{L_i}}b_{i_j}+b_i>B$. Thus, we denote by
	\begin{equation}
		k=\arg\max_{1\leq j\leq L_i}\left\{b'_{i(1)},\cdots,b'_{i(j)},\cdots,b'_{i(L_i)}\right\}
	\end{equation}
	Based on that, we have
	\begin{flalign}
		p_i&=b_{i_k}\cdot{A(v_i|Y_{k-1})}/{A(v_{i_{k}}|Y_{k-1})}\leq A(v_i|Y_{k-1})\\
		&\leq A(v_i|X_{i-1})
	\end{flalign}
	where the inequality (16) is because we have ${A(v_{i_{k}}|Y_{k-1})}\geq b_{i_k}$. From the Lemma 2, we have $b_i\leq p_i$. Consider a vehicle $v_i\in X_L$ with its bid $b_i$, it cannot be moved forward in the sorting (11) by increasing its bid. We can know that $i\leq k$ and $X_{i-1}\subseteq Y_{k-1}$ due to the fact that $X_j=Y_j$ when $j<i$. Therefore, we have $A(v_i|Y_{k-1})\leq A(v_i|X_{i-1})$ because of its submodularity, and the inequality (17) is established.
	
	\noindent
	\textbf{Case 2: }Otherwise, we have $p_i=\max\{\max\{b'_i\},A(v_i|Y_{L_i})\}$ if $\sum_{v_{i_j}\in Y_{L_i}}b_{i_j}+b_i\leq B$. Thus,
	\begin{equation}
		p_i\leq A(v_i|Y_{L_i})\leq A(v_i|X_{i-1})
	\end{equation}
	due to the similar analysis with the inequality (17). Combining the case 1 and case 2, this lemma can be proven.
\end{proof}

\begin{lem}
	The TBSAP is truthful.
\end{lem}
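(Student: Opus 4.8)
The plan is to invoke Myerson's characterization stated as Theorem 1: having already established individual rationality and profitability (Lemmas 2 and 3), it suffices to verify the two remaining conditions of Theorem 1 for the mechanism realised by Algorithm 2 --- \textbf{monotonicity} of the winner-allocation stage and the \textbf{critical-payment} property of the payment-determination stage.

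For monotonicity the key identity is $\hat{P}(v_i|X)=A(v_i|X)/b_i-1$, which, with the task set $T_i$ (hence $A(v_i|X)$) fixed, is strictly decreasing in $b_i$ for every set $X$. Suppose $v_i$ wins with bid $b_i$, occupying some position in the sorting (11) preceded by the set $X'$ of higher-ranked vehicles, so $\hat{P}(v_i|X')\ge 0$ and $\sum_{v_k\in X'}b_k+b_i\le B$. If $v_i$ lowers its bid to $b_i^{\circ}<b_i$ with all other bids unchanged, then at every set the marginal gains of the other vehicles are unchanged while $\hat{P}(v_i|\cdot)$ only increases, so the greedy reproduces the same selections up to the first step at which it now prefers $v_i$, giving a shorter prefix $X''\subseteq X'$ in front of $v_i$. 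It then remains to check that $v_i$ is accepted there rather than triggering the break: its marginal gain is still non-negative since $A(v_i|X'')\ge A(v_i|X')\ge b_i>b_i^{\circ}$ by submodularity, and the budget holds since $\sum_{v_k\in X''}b_k+b_i^{\circ}<\sum_{v_k\in X'}b_k+b_i\le B$. As $X''$ coincides with a prefix of the original run, no break occurs before $v_i$, so $v_i$ still wins.

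For the critical-payment property I would show, with all other bids fixed, that $v_i$ wins exactly for bids not exceeding the value $p_i$ returned by the payment stage, so that $p_i$ is the largest bid with which $v_i$ can win. Running the greedy on $V_{-i}$ yields the sorting (12), its prefixes $Y_0,Y_1,\dots$ and the index $L_i$. When $v_i$ re-enters with bid $b$, its presence leaves the other vehicles' marginal gains unchanged, so the combined run follows $v_{i_1},v_{i_2},\dots$ until it first prefers $v_i$; the comparison $\hat{P}(v_i|Y_j)\ge\hat{P}(v_{i_{j+1}}|Y_j)$ is precisely $b\le b'_{i(j+1)}$ from (13), and beyond the last kept vehicle $v_{i_{L_i}}$ the condition to take the remaining slot is $b\le A(v_i|Y_{L_i})$, that slot being available unless it is blocked by the budget (case (1)). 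Taking the largest of these thresholds over $j=1,\dots,L_i$ and the free slot gives exactly (14). Two implications must then be checked: (i) whenever $b$ is below the relevant threshold, the position at which $v_i$ lands is genuinely a winning one --- no premature break, non-negative marginal gain (from $\hat{P}(v_i|Y_j)\ge\hat{P}(v_{i_{j+1}}|Y_j)\ge 0$ and submodularity), and budget respected (this is where the reservation of $b_i$ inside the definition of $L_i$, together with $Y_j\subseteq Y_{L_i-1}$, is used); and (ii) if $b$ exceeds every threshold, then $v_i$ is overtaken at each position up to $L_i$ and has negative marginal gain or is budget-blocked at the free slot, so the greedy skips it and it loses.

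The step I expect to be the main obstacle is part (i): tracking the knapsack constraint through the greedy. One has to make sure that inserting $v_i$ at a deep position does not overflow the budget already committed to $Y_j$, and that pulling $v_i$ forward (for a smaller bid) does not instead provoke an earlier, budget-triggered break that would eject it; the three-way split around $v_{i_{L_i+1}}$ (budget exhausted / negative marginal gain / nonexistent) has to be reconciled with this accounting. A minor technical point is that the $\arg\max$ in both stages should be fixed by a deterministic tie-breaking rule so that ``reproduces the same selections'' is literally correct.
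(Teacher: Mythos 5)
Your proposal follows essentially the same route as the paper's proof: invoke Theorem~1 and verify monotonicity (lowering the bid only increases $\hat{P}(v_i|\cdot)$ while leaving the other vehicles' marginal gains unchanged, so a winner stays a winner) together with the critical-payment property via the replacement thresholds $b'_{i(j)}$ and the residual-slot bound $A(v_i|Y_{L_i})$ that define (14). If anything you are more careful than the paper, which only argues the direction ``a bid above $p_i$ loses'' and leaves the budget bookkeeping, the tie-breaking, and the converse ``a bid at most $p_i$ wins'' implicit.
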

\begin{proof}
	According to the Theorem 1, it is sufficient to show that our budgeted reverse auction mechanism satisfies the monotonicity and critical payment. Let us look at the monotonicity first. Consider a vehicle $v_i\in X_L$ with its bid $b_i$, it cannot be moved backward in the sorting (11) by offer a lower bid $b^\circ_i$ with $b^\circ_i<b_i$ since $A(v_i|X_{i-1})/b^\circ_i>A(v_i|X_{i-1})/b_i$. Therefore, if $v_i$ is a winner by offering a bid $b_i$ in the winner allocation stage, it must be a winner by offering $b^\circ_i$.
	
	Then, we show the payment $p_i$ is the critical payment for the vehicle $v_i$ where it cannot win this auction if giving a bid larger than $p_i$ definitely. Based on the (14), when $\sum_{v_{i_j}\in Y_{L_i}}b_{i_j}+b_i>B$, the $v_i$ must be able to replace one vehicle in $Y_{L_i}$ if it wants to be a winner. If the $v_i$ offers a bid $b_i^\circ$ larger than $p_i=\max\{b'_i\}$, we have 
	\begin{equation}
		A(v_i|Y_{j-1})/b_i^\circ<A(v_{i_j}|Y_{j-1})/b_{i_j}\text{ for }1\leq j\leq L_i
	\end{equation}
	The $v_i$ cannot replace any one in $Y_{L_i}$ with bid $b_i^\circ$, thereby it cannot be a winner because of the budget constraint. Based on the (14), when $\sum_{v_{i_j}\in Y_{L_i}}b_{i_j}+b_i\leq B$, suppose the $v_i$ can not replace any vehicle in $Y_{L_i}$, it sill can be a winner if $A(v_i|Y_{L_i})\geq b_i$ since there is residual budget. If the $v_i$ offers a bid $b_i^\circ$ larger than $p_i=\max\{\max\{b'_i\},A(v_i|Y_{L_i})\}$, it not only fail to replace any one in $Y_{L_i}$ according to the (19), but also have $b_i^\circ>A(v_i|Y_{L_i})$, thereby it cannot be a winner. 
\end{proof}

\begin{lem}
	The TBSAP is computationally efficient.
\end{lem}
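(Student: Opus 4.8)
The plan is to read the running time directly off Algorithm \ref{a2}, bounding the iteration count of every loop and the cost of a single evaluation of the objective, and then adding up the two stages.

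First I would pin down the cost of one oracle query. To evaluate a unit marginal gain $\hat{P}(v_i\mid X)=(A(v_i\mid X)-b_i)/b_i$ it suffices to maintain, together with $X$, the set of already-covered tasks $C=\bigcup_{v_j\in X}T_j$; then $A(v_i\mid X)=\sum_{t_j\in T_i\backslash C}a_j$ is computed in $O(|T_i|)=O(m)$ time, and refreshing $C$ after a vehicle is inserted into $X$ is likewise $O(m)$. Hence a single $\hat{P}$-query costs $O(m)$.

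Next I would bound the winner allocation stage. Its \textbf{while} loop adds exactly one vehicle to $X$ per pass and terminates no later than $X=V$, so it executes at most $n$ times; each pass performs an $\arg\max$ over at most $n$ candidates at $O(m)$ apiece, i.e.\ $O(nm)$, and the constant-work bookkeeping is negligible. So this stage runs in $O(n^2m)$. Then I would bound the payment determination stage. Its outer \textbf{for} loop runs over the winners of $X$, hence at most $n$ times; for each fixed winner $v_i$ the inner \textbf{while} loop over $V_{-i}$ again inserts one vehicle per pass and terminates no later than $Y_j=V_{-i}$, so it runs at most $n-1$ times, each pass once more dominated by an $O(nm)$ $\arg\max$ (the update of $p_i$ and the post-loop step cost $O(m)$). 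This yields $O(n^2m)$ per winner and $O(n^3m)$ in total. Summing the two stages gives a running time of $O(n^3m)$, which is polynomial in the problem size, proving the lemma.

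There is no genuine obstacle in this argument; the one point worth stating carefully is that, unlike the greedy-heuristic of Algorithm \ref{a1} whose iteration count is $O(B/\min_{v_i\in V}b_i)$ and therefore only pseudo-polynomial in the bids, both \textbf{while} loops of TBSAP are controlled purely by $|V|=n$, which is precisely why the bound stays (strongly) polynomial irrespective of the magnitudes of $B$ and the $b_i$.
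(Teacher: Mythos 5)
Your proof is correct and establishes the lemma. The overall structure is the same as the paper's --- bound the cost of one evaluation of the marginal gain, bound the number of loop iterations in each stage, and multiply --- but you bound the iteration counts differently. The paper bounds the number of winners (and hence iterations) by $B/\min_{v_i\in V}\{b_i\}$, inherited from its analysis of the greedy-heuristic, and arrives at $O(B^2nm/(\min_{v_i\in V}\{b_i\})^2)$; you instead observe that each pass of the \textbf{while} loops inserts exactly one vehicle and the loops are capped at $|V|$ and $|V_{-i}|$ respectively, which gives the budget-independent bound $O(n^3m)$. Your bound is the stronger statement: the paper's expression is only pseudo-polynomial in the bit-length of $B$ and the bids (though of course the true iteration count is at most $\min\{n, B/\min_i b_i\}$, so the paper's conclusion of polynomial time is still valid). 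Your accounting of the oracle cost is also slightly more careful --- maintaining the covered task set so that each query of $\hat{P}$ costs $O(m)$ rather than recomputing $\bar{P}$ from scratch --- but this does not change the substance of the argument. Both proofs are acceptable; yours buys a strongly polynomial guarantee and makes explicit why TBSAP, unlike the raw greedy-heuristic bound, does not depend on the magnitudes of $B$ and the $b_i$.
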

\begin{proof}
	We have known that there are at most $B/\min_{v_i\in V}\{b_i\}$ winners and it takes $O(Bnm/\min_{v_i\in V}\{b_i\})$ to finish the winner allocation stage where $|T|=m$ and $|V|=n$. Then, for each winner $v_i\in X_L$, a process similar to the winner allocation stage needs to be executed that takes $O(Bnm/\min_{v_i\in V}\{b_i\})$ running time as well. Thus, the running time of the payment determination state is $O(B^2nm/(\min_{v_i\in V}\{b_i\})^2)$ which is the total time complexity as well. The TBSAP can be done in polynomial time.
\end{proof}

\begin{thm}
	The TBSAP, shown as Algorithm \ref{a2}, is an effective budgeted reverse auction mechanism to solve Problem 1 that satisfies individual rationality, profitability, truthfulness, and computational efficiency.
\end{thm}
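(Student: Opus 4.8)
The plan is to obtain Theorem~2 directly by assembling Lemmas~2--5, since each of the four required properties has already been isolated as its own lemma. The only thing not yet checked explicitly is that Algorithm~\ref{a2} returns a \emph{feasible} solution to Problem~1: in the winner allocation stage the running total $s$ is maintained and a candidate $v^*$ is rejected whenever $s+b^*>B$, so the returned set $X$ satisfies $\sum_{v_i\in X}b_i\le B$ by construction; the same loop also enforces $\hat{P}(v^*|X)\ge 0$ at every acceptance, so $X=X_L$ is exactly the greedy prefix described before equation~(11). Hence the output is a legitimate allocation together with the payment vector $\{p_i\}_{v_i\in X_L}$ computed by (14).

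I would then invoke the four lemmas in turn. Individual rationality is Lemma~2, which gives $p_i\ge b_i$ for every winner; profitability is Lemma~3, which shows $A(v_i|X_{i-1})\ge p_i$ for each $1\le i\le L$ and therefore $\bar{P}(X_L)\ge 0$ so $P(X_L)\ge 0$; computational efficiency is Lemma~5, whose $O(B^2nm/(\min_{v_i\in V}\{b_i\})^2)$ bound is polynomial in the input size; and truthfulness is Lemma~4, which verifies the monotonicity and critical-payment conditions of Theorem~1. I would emphasize that Lemma~4 is the substantive ingredient: the payment rule (14) was constructed precisely so that $\max\{b'_i\}$, together with $A(v_i|Y_{L_i})$ when residual budget remains, is the threshold bid above which $v_i$ can no longer displace any vehicle of the shadow ordering $Y_{L_i}$, which is exactly the critical payment; the monotonicity half follows because lowering $v_i$'s bid only raises $\hat{P}(v_i|\cdot)$ and so cannot move $v_i$ backward in the sorting (11).

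One subtlety in stitching things together is the apparent circularity between Definition~1 (which compares $p_i$ to the true cost $c_i$) and Lemma~2 (which compares $p_i$ to the submitted bid $b_i$). I would resolve it by noting that truthfulness, once established via Lemma~4, makes $b_i=c_i$ the dominant strategy, so at the resulting equilibrium $p_i\ge b_i=c_i$ and every winner has non-negative utility, which is what Definition~1 asks for. With feasibility plus the four properties in hand, Algorithm~\ref{a2} satisfies every requirement of Problem~1, establishing Theorem~2. I do not expect a genuine obstacle at this stage --- all the real work lives in Lemmas~2--5 (and ultimately in the design of (14)), so the proof of Theorem~2 itself is a brief collation of those results.
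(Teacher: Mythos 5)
Your proposal matches the paper's proof, which is exactly the one-line collation of Lemmas 2--5; your additional remarks on feasibility of the returned set and on reconciling $p_i\ge b_i$ with $p_i\ge c_i$ via truthfulness are sound elaborations of what the paper leaves implicit. No gap.
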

\begin{proof}
	As indicated above, this theorem can be proved by putting from Lemma 2 to Lemma 5 together.
\end{proof}

\section{Numerical Simulations}
In this section, we first evaluate the performance of our reputation-based DPoS mechanism, then test the correctness and efficiency of our budgeted reverse auction algorithm. The simulation setup and results will be displayed.

\subsection{Simulation Setup}
To evaluate the reputation-based DPoS mechanism, we need to observe how different behaviors in a consensus round affects its reputation value. Here, we take two full nodes as an example to demonstrate it, where one is a normal node that behaves legitimately and the other is an abnormal node that sometimes makes some wrong behaviors, such as not voting in witness election, producing an invalid block as the leader, or verifying a block wrongly. Shown as the (4), we give the parameters $A=0.005$, $B=0.05$, and $C=0.01$. Then, we compare our reputation-based DPoS with the general DPoS mechanism whose each full node in $\mathbb{Z}$ is given by the same voting weight. In other words, we do not consider their reputation values in witness election of the general DPoS. In these two DPoS mechanism, each normal full node $TA_{a}$ votes to support $TA_{a'}\in\mathbb{Z}\backslash\{TA_a\}$ with $Rep_{a'}\geq\theta$ where $\theta=0.5$ is a threshold. Conversely, we consider the most extreme case where each abnormal full node $TA_{a}$ votes to support $TA_{a'}\in\mathbb{Z}\backslash\{TA_a\}$ with $Rep_{a'}<\theta$. In our BRTM system, we define the full node set with $|\mathbb{Z}|=100$ and the witness committee with $|\mathbb{M}|=70$. To simulate a real state, we give the reputation values of normal full nodes by sampling from $[0.5,1]$ uniformly and abnormal full nodes by sampling from $[0,0.5)$ uniformly. To evaluate the reliability of this system, we define the ratio of abnormal full nodes (RAFN) as ``$[\#\text{ abnormal full nodes}]/|\mathbb{Z}|$'' and the ratio of normal witnesses (RNW) as $[\#\text{ normal witnesses}/\mathbb{M}]$.

To simulate the budgeted reverse auction mechanism, we consider a smart city $S=(TA,V)$ with a $1000m\times1000m$ square area and the task set $T$ issued by the $TA$ are distributed over this area. Then, the active vehicles $V$ are distributed over this area arbitrarily as well. For each vehicle $v_i\in V$, its ability to detect is different, thereby we assume there is a detection distance $d_i$ that is distributed in $[10,30]$ uniformly. It means that the $v_i$'s task set $T_i$ contains all tasks whose distances from the $v_i$ are less than $d_i$. Finally, the TA's appraisement $a_j$ for each task $t_j\in T$ is distributed in $(0,10]$ uniformly and the cost of $T_i$ for the $v_i$ can be denoted by $c_i=\kappa\cdot|T_i|$ where the parameter $\kappa$ is distributed in $(0,5]$ uniformly.

\begin{figure}[!t]
	\centering
	\includegraphics[width=2.5in]{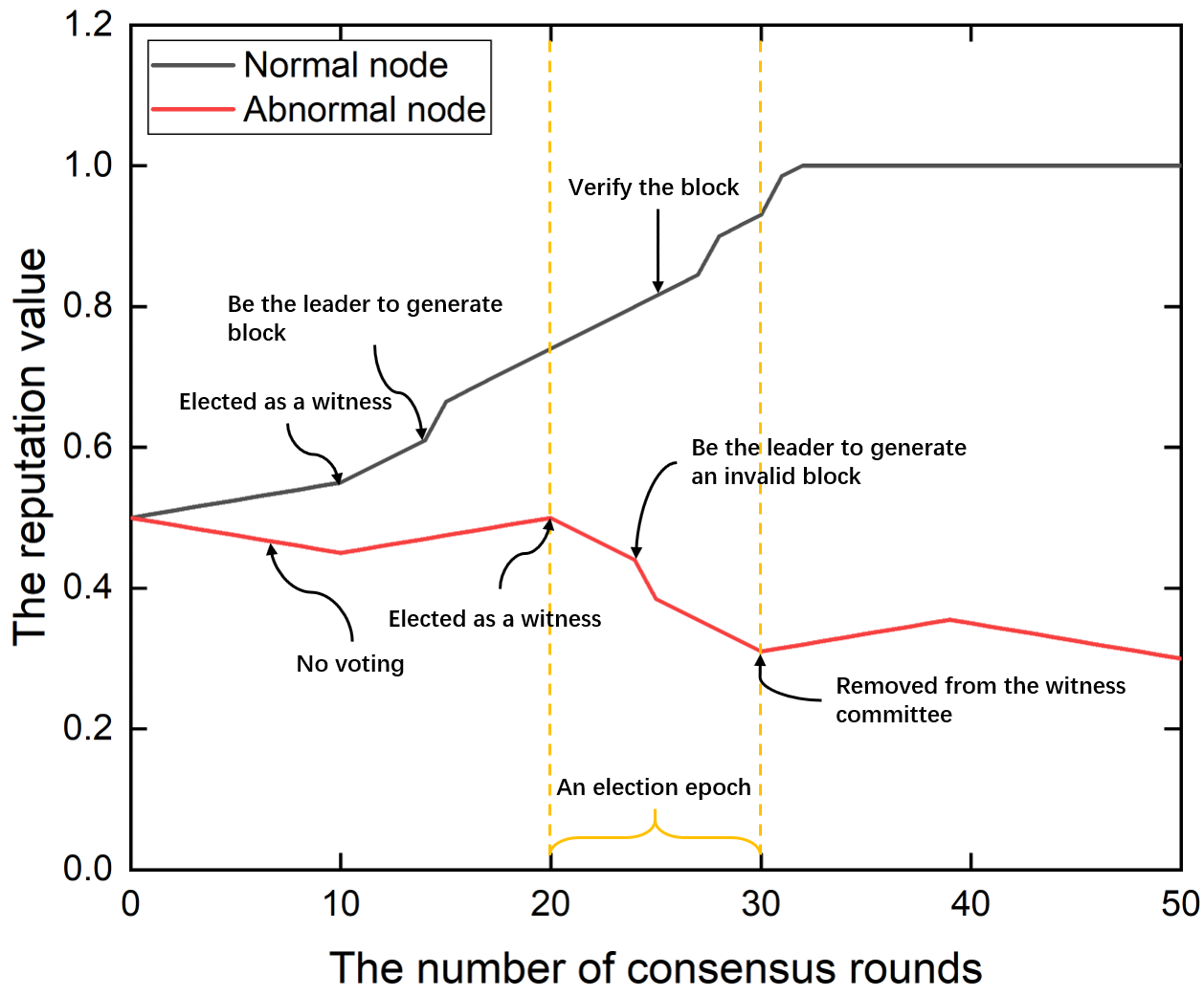}
	\caption{The reputation values change with different behaviors of the two full nodes in the consensus process.}
	\label{fig3}
\end{figure}

\begin{figure}[!t]
	\centering
	\includegraphics[width=2.5in]{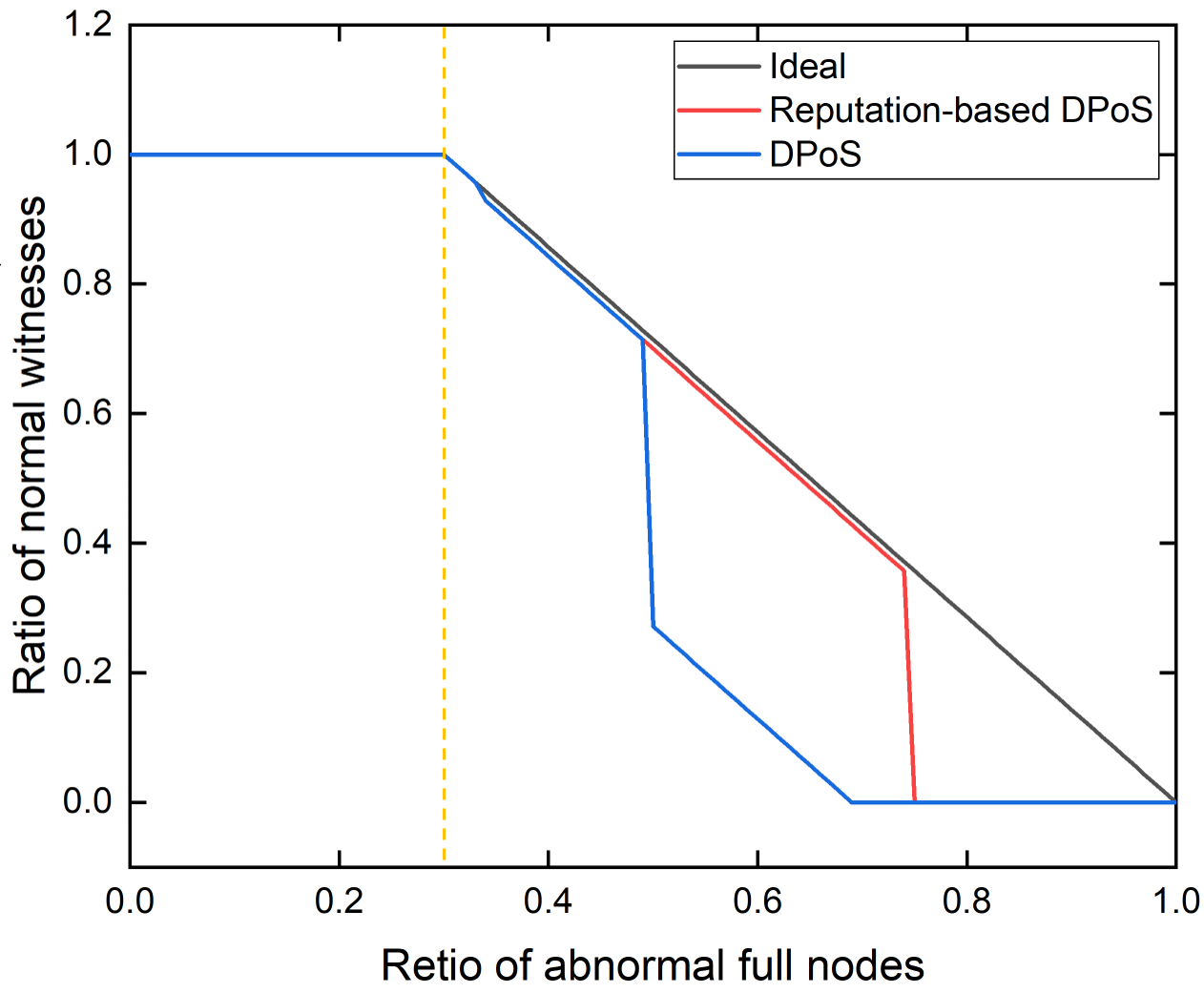}
	\caption{The ratio of normal witnesses (RNW) changes with the ratio of abnormal full nodes (RAFN).}
	\label{fig4}
\end{figure}

\subsection{Simulation Results and analysis}
The simulation results about our reputation-based DPoS mechanism are shown in Fig. \ref{fig3} and Fig. \ref{fig4}.

\textbf{1) The impact of behavior on reputation: }Fig. \ref{fig3} draws the reputation values change with different behaviors of the two full nodes the consensus process. Shown as Fig. \ref{fig3}, we can see that the repuation of the normal node is increased gradually by its legitimate behaviors, but the reputation of the abnormal node is decreased by its wrong behaviors. Here, we assume there are $10$ consensus rounds for each election epoch, namely $|\mathbb{D}|=10$. To the normal node, it votes to elect witnesses at the first election epoch ($1$-th to $10$-th round), which adds $0.005$ reputation per round. At the second epoch ($11$-th to $20$-th round), it is selected as a witness member because of its good reputation. Thus, it votes and verifies the block correctly, which adds $0.015$ reputation per round. Especially, at the $15$-th consensus round, it becomes the leader that generates a block successfully, which adds $0.055$ reputation in this round. These legitimate behaviors increase its reputation until approaching $1$. To the abnormal node, it is selected as a witness member at the third epoch ($21$-th to $30$-th round). It does not vote and verify the block, which reduces $0.015$ reputation per round. Especially, at the $15$-th consensus round, it becomes the leader that generates a block successfully, which adds $0.055$ reputation in this round. Especially, at the $25$-th consensus round, it becomes the leader that generates an invalid block, which reduces $0.055$ reputation in this round.

\begin{figure}[!t]
	\centering
	\includegraphics[width=2.5in]{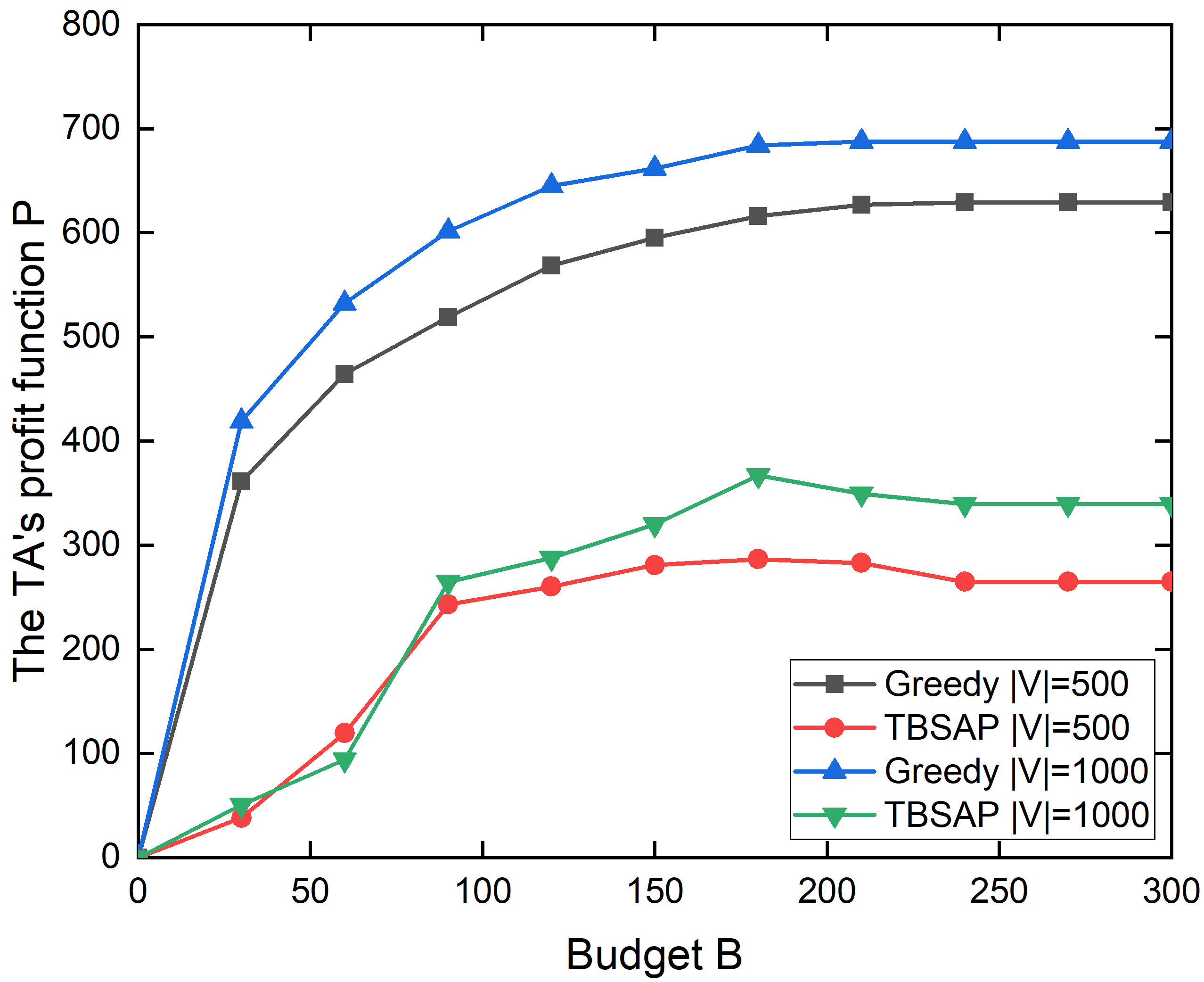}
	\caption{The performance obtained by greedy-heuristic and TBSAP algorithm under the different budgets and number of vehicles.}
	\label{fig5}
\end{figure}

\begin{figure}[!t]
	\centering
	\includegraphics[width=2.5in]{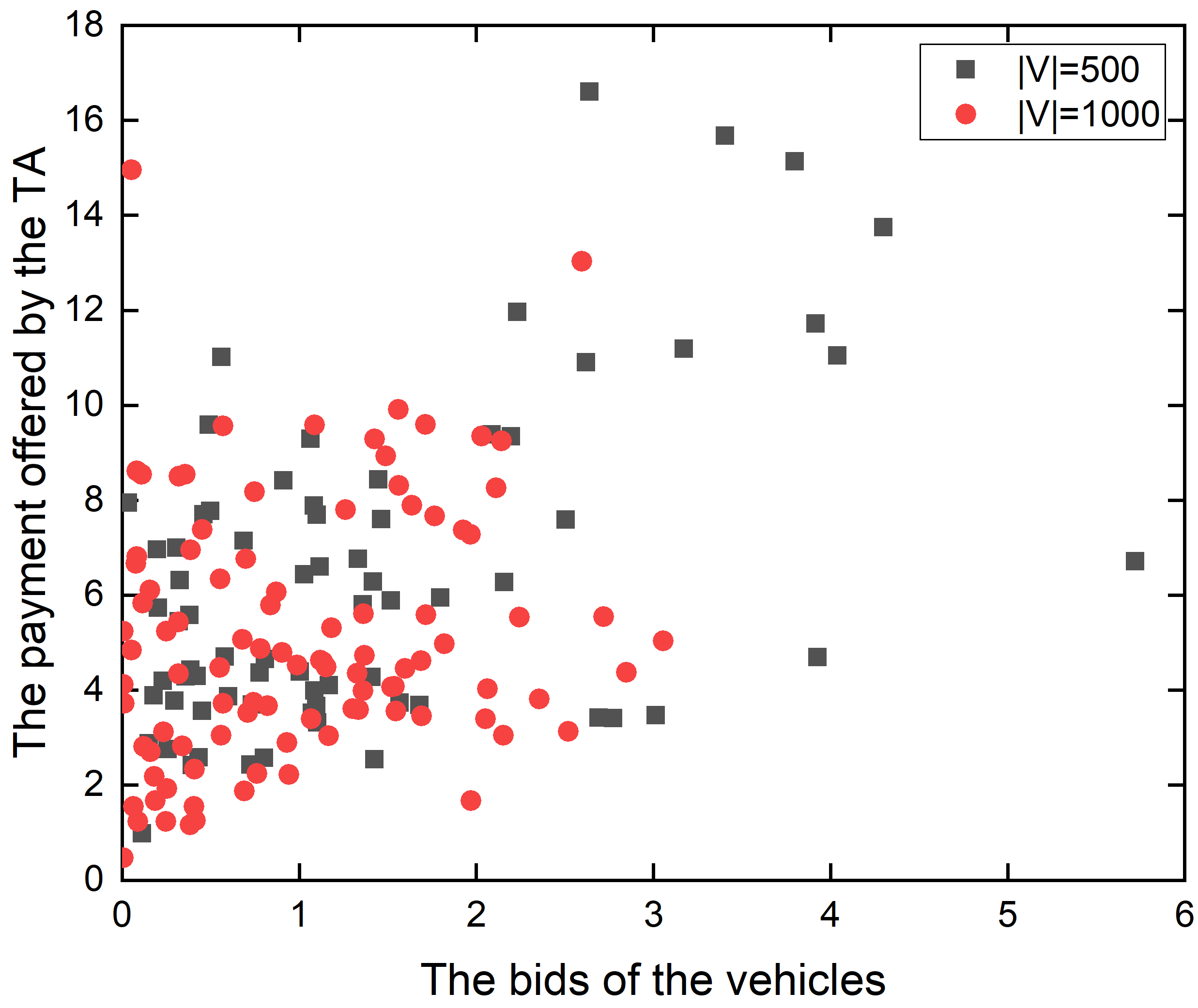}
	\caption{The distributions of bid and payment for each vehicle under the different number of vehicles with budget $B=100$}
	\label{fig6}
\end{figure}

\textbf{2) Reliability: }Fig. \ref{fig4} draws the ratio of normal witnesses changes with the ratio of abnormal full nodes, which describes the reliability of our BRTM system. The higher the ratio of normal witnesses is, the more reliable this system is. Shown as Fig. \ref{fig4}, the ideal line is the maximum RNW we can obtain theoretically. For instance, if RAFN is $0.6$, namely there are $60$ abnormal full nodes, the optimal RNW is equal to $100\cdot(1-0.6)/70$. The RNWs are all closed to the ideal value when the RAFN is less than $0.5$. This is because the number of abnormal nodes is less than normal nodes, thereby their voting hardly changes the outcome of witness election. However, when the RAFN is from $0.5$ to $0.75$, the RNW of our reputation-based DPoS is larger than that of general DPoS obviously since the normal nodes have larger voting weight even though their quantity is small in total. Therefore, the reliability of our system is improved by the reputation-based DPoS especially when the proportion of abnormal nodes is higher and the most extreme case happens where each abnormal node votes to support all abnormal nodes.

The simulation results about our budgeted reverse auction mechanism are shown in Fig. \ref{fig5} and Fig. \ref{fig6}.

\textbf{3) The TA's profit function: }Fig. \ref{fig5} draws the performance obtained by greedy-heuristic and TBSAP algorithm under the different budgets and number of vehicles in this city, where the number of vehicles $|V|$ is $500$ or $1000$. Shown as Fig. \ref{fig5}, we can observe that the TA's profit obtained by TBSAP is less than that obtained by greedy-heuristic under any budget in order to ensure the truthfulness. To the results obtained by greedy-heuristic, we can observe the TA's profit increases as the budget increases. When the budget is larger than $200$, the TA's profit keeps unchanged. There is a similar evolutive trend in the results obtained by TBSAP, but it declines slightly when the budget is larger than $200$. At this time, none of the  optional vehicles has a positive unit marginal gain, thereby the operation step in line 25 of Algorithm \ref{a2} is possible to increase the payment and reduce the profit further.

\textbf{4) Bids and payments: }Fig. \ref{fig6} draws the distributions of bid and payment for each vehicle under the different number of vehicles with a budget $B=100$. Shown as Fig. \ref{fig6}, we can see that the points are more concentrated on the bottom left which have lower bids and payments when the number of vehicles is larger. This is because there are more active vehicles with lower bids that can be selected to undertake collection tasks. It explains why the TA's profit increases with the number of vehicles in this city.

\section{Conclusion}
In this paper, we designed and implemented a reliable and efficient traffic monitoring system based on blockchain technology and budgeted reverse auction mechanism. To enhance the security and reliability of this system, we gave a lightweight information trading framework by using asymmetric encryption. We devised a reputation-based DPoS consensus mechanism so as to improve the efficiency of recording and storing in blockchain. Then, to incentivize vehicles to undertake collection tasks, we developed a budgeted reverse auction algorithm that satisfies individual rationality, profitability, truthfulness, and computational efficiency. Finally, the results of numerical simulations indicated that our model is valid, and verify the correctness and efficiency of our algorithms.

\section*{Acknowledgment}

This work is partly supported by National Science Foundation under grant 1747818 and 1907472.

\ifCLASSOPTIONcaptionsoff
  \newpage
\fi

\bibliographystyle{IEEEtran}
\bibliography{references}

\begin{IEEEbiography}[{\includegraphics[width=1in,height=1.25in,clip,keepaspectratio]{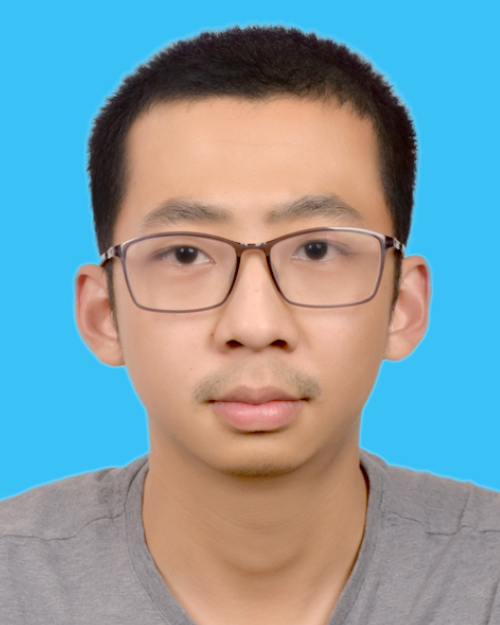}}]{Jianxiong Guo}
	is a Ph.D. candidate in the Department of Computer Science at the University of Texas at Dallas. He received his B.S. degree in Energy Engineering and Automation from South China University of Technology in 2015 and M.S. degree in Chemical Engineering from University of Pittsburgh in 2016. His research interests include social networks, data mining, IoT application, blockchain, and combinatorial optimization.
\end{IEEEbiography}

\begin{IEEEbiography}[{\includegraphics[width=1in,height=1.25in,clip,keepaspectratio]{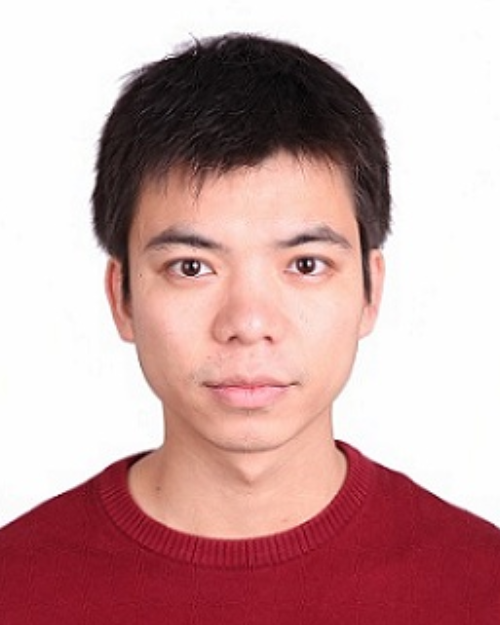}}]{Xingjian Ding}
	received the BE degree in electronic information engineering from Sichuan University, Sichuan, China, in 2012. He received the M.S. degree in software engineering from Beijing Forestry University, Beijing, China, in 2017. Currently, he is working toward the PhD degree in the School of Information, Renmin University of China, Beijing, China. His research interests include wireless rechargeable sensor networks algorithm, design and analysis, and blockchain.
\end{IEEEbiography}

\begin{IEEEbiography}[{\includegraphics[width=1in,height=1.25in,clip,keepaspectratio]{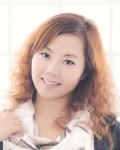}}]{Weili Wu}
	received the Ph.D. and M.S. degrees from the Department of Computer Science, University of Minnesota, Minneapolis, MN, USA, in 2002 and 1998, respectively. She is currently a Full Professor with the Department of Computer Science, The University of Texas at Dallas, Richardson, TX, USA. Her research mainly deals in the general research area of data communication and data management. Her research focuses on the design and analysis of algorithms for optimization problems that occur in wireless networking environments and various database systems.
\end{IEEEbiography}

\end{document}